
\documentclass[twocolumn,amsthm]{autart}    

\usepackage{graphicx}          
\usepackage{cite}

\usepackage{amsmath,amssymb,amsfonts}
\usepackage{algorithmic}
\usepackage{dcolumn}	
\usepackage{array}
\usepackage{color}

\usepackage{multirow}
\usepackage{rotating,makecell}
\newcommand{\diff}{\,\mathrm d}
\newtheorem{assumption}{Assumption}
\newtheorem{remark}{Remark}
\newtheorem{theorem}{Theorem}
\newtheorem{lemma}{Lemma}	

\newtheorem{proposition}{Proposition}

\usepackage{caption}
\usepackage{float}
\usepackage{subfig}
\usepackage[title]{appendix}

\interdisplaylinepenalty=2500

\begin{document}

\begin{frontmatter}

\title{Safe Delay-Adaptive Control of Strict-Feedback Nonlinear Systems with Application in Vehicle Platooning\thanksref{footnoteinfo}} 

\thanks[footnoteinfo]{This paper was not presented at any IFAC 
meeting.}

\author{Zhenxu Zhao}\ead{23220221151759@stu.xmu.edu.cn}~and~    
\author{Ji Wang}\ead{jiwang@xmu.edu.cn}              

\address{Department of Automation, Xiamen University, Xiamen 361005, China}  

\begin{keyword}                           
Delay-adaptive control, backstepping, safe control, strict-feedback nonlinear systems, vehicle platooning          
\end{keyword}                             										

\begin{abstract}                          
This paper presents a safe delay-adaptive control for a strict-feedback nonlinear ODE  with a delayed actuator, whose dynamic is also a strict-feedback nonlinear ODE and the delay length is unknown. By formulating the delay as a transport PDE, the plant becomes a sandwich configuration consisting of nonlinear ODE-transport PDE-nonlinear ODE, where the transport speed in the PDE is unknown. We propose a predictor-based nonovershooting backstepping transformation to build the nominal safe delay-compensated control, guaranteeing that the output of the distal ODE safely tracks the target trajectory from one side without undershooting. To address the uncertainty in the delay, we incorporate recent delay-adaptive and safe adaptive technologies to build a safe adaptive-delay controller. The adaptive closed-loop system ensures 1) the exact identification of the unknown delay in finite time; 2) the output state stays in the safe region all the time, especially in the original safe region, instead of a subset, after a finite time; 3) all states are bounded, and moreover, they will converge to zero if the target trajectory is identically zero.  In the simulation, the proposed control design is verified in the application of safe vehicle platooning. It regulates the spacing between adjacent vehicles to converge to a small distance and avoids collisions by ensuring they do not breach the safe distance at any time, even in the presence of large unknown delays and at a relatively high speed.
\end{abstract}

\end{frontmatter}

\section{Introduction}
 Vehicle platooning \cite{platoon1}, as an automatic vehicle-following control system that ensures the vehicles in the queue follow each other with a small constant spacing, has gained widespread attention due to its benefits in improving traffic capacity, reducing congestion, and saving fuel. An effective platooning strategy should ensure not only string stability but also safety, i.e., keeping the spacing between adjacent vehicles at a preset safe distance and not breaching this safe distance all the time for the purpose of avoiding collision \cite{platoon2}. Besides, delay compensation is an important issue in vehicle platooning, considering that delays, which have an impact on the system's stability, widely appear in practice, and moreover, its length cannot always be known exactly. The results of safe control design for vehicle platooning under unknown delays are still rare because of the technical challenges in combining delay-adaptive design and CBF-based safe control. 

\subsection{Delay-adaptive control}
Since the emergence of ``Smith predictor"  \cite{bib1}, various control schemes have been developed to reduce the effects of time delays \cite{delay11,bib2}, particularly in nonlinear systems with state delays \cite{delay4,delay3} or input delays \cite{delay5,delay6}. In  \cite{bib4,bib5}, a backstepping-based technique was proposed on the basis of representing the time delay as a transport PDE. Utilizing this technique, the problem of delay compensation in nonlinear systems has been addressed in \cite{delay10, bib5}. This approach has also been extended to compensate for various types of delays, including time-varying delays \cite{varydelay} and state-dependent delays \cite{statedelay,Diagne1,Diagne2}. In addition to the ODE systems, this approach has also been applied in PDE delay compensation, such as in \cite{bib7,bib8,bib9}, where the plant becomes a cascade of PDEs after representing the delay as a transport PDE. More results about nonlinear delay-compensated control are included in \cite{NBbook}.
The aforementioned results consider a known delay. When the exact value of the delay is unknown, a delay-adaptive approach is required to compensate for it. A Lyapunov-based adaptive delay controller was developed for ODE plants \cite{bib12,bib13,bib14,delayadaptive2,delayadaptive3,bib15}. It provides better transient performance than the traditional adaptive methods such as swapping or passive identifiers \cite{bib5}  and has been further developed for PDE systems in \cite{ada2,ada3}. Recently, a delay-adaptive controller for coupled hyperbolic PDE subject to an unknown input delay has been proposed in \cite{delay9}, where a delay estimator is built based on batch least-square identifier (BaLSI) that was introduced in \cite{bib17,bib18} for nonlinear ODEs and extended to PDEs in \cite{bib16,bib19,bib20,bib21}. By this delay estimator, the unknown delay can be exactly estimated in the finite time, which contributes to better transient performance, enabling exponential regulation of the plant states.
\subsection{CBF safe control}
The current delay-adaptive control designs do not consider the safety issue. In many engineering applications,
like autonomous driving, robotics, and UAV, the safety for avoiding collision is vital \cite{safe2,safe4,bib25}.  Control Barrier Functions (CBFs) introduced in \cite{bib23,bib24,bib25}  have been demonstrated as an effective approach to guarantee safety. The CBF-based safe design constrains the focused system state in the safe region by ensuring the nonnegativity of CBFs and then building a safety filter to override the control law. In addition to the above one relative degree CBFs, the high relative degree CBF design was reported in \cite{bib26,nguyen2016exponential} whose root is the nonovershooting control design in \cite{safe1} for a class of strict-feedback nonlinear systems. Utilizing this tool \cite{safe1}, some advanced safe control designs were proposed for the stochastic nonlinear systems  \cite{nonover2}, the Stefan PDE model  \cite{bib30}, or in the prescribed-time safety task \cite{nonover} where safety is only enforced within a preset finite time determined by the user. 
 The aforementioned works do not consider parameter uncertainties in the safe design.
In the case of unknown parameters, guaranteeing safety has attracted the attention of many scholars due to its practical and theoretical significance. The representative work is  \cite{bib29}, which introduces the adaptive Control Barrier Functions (aCBFs), on the basis of the adaptive Control Lyapunov Functions (aCLFs), to ensure adaptive safety. However, it has the conservatism that the plant states are constrained in a subset of the original safe set. The study in \cite{bib32} alleviates this conservatism by leveraging the parameter adaption and data-driven model estimation. Some extended safe-adaptive control results can be found in \cite{bib33,bib37}. Recently, an adaptive-safe control scheme was proposed in \cite{safe5} based on the nonovershooting control design in \cite{safe1} and the BaLSI, which reduces the conservatism of the current safe-adaptive schemes,  constraining the plant states in the original safe set after a finite time, and achieves the exponential regulation of $2\times2$ hyperbolic PDE-ODE cascade, in the presence of the uncertainties in both PDE and ODE subsystems. \par
\subsection{Contributions}
This paper presents the safe delay-adaptive control design for strict-feedback nonlinear ODEs subject to an unknown state delay, ensuring that the output state
safely tracks the target trajectory from one side without undershooting. Main contributions are:\par
1) Unlike the works on safe delay-compensated control in \cite{safe7,safe6,safe8,acc3}  with a known delay, here we achieve safe delay-adaptive control under an unknown delay.\par
2) The existing delay-adaptive control results, such as \cite{bib13,bib14,delay9,wang2021delay,delayadaptivebook}, do not take safety into account, while delay-adaptive regulation and safety guarantee are achieved simultaneously in this paper. \par
3) Compared with the recent result in safe adaptive control design for sandwich ODE-PDE-ODE systems \cite{safe5} where the distal ODE to be safely regulated is linear, the distal ODE is nonlinear in this paper. Besides, we extend the safe stabilization control task in  \cite{safe5} to safe trajectory tracking.\par
4) Different from the previous nonovershooting controller designs \cite{nonover,nonover2,safe1}, the additional nonlinear actuator dynamics and unknown delays are taken into account in our paper.\par
5) To our knowledge, this is the first design of safe delay-adaptive control. The result is new, even for linear systems.
\subsection{Organization}
The problem is formulated in Sec. \ref{sec2}. We present the nominal predictor-based safe control design in Sec. \ref{sec3}. Further, the safe adaptive design for this system, where the delay length is unknown, is proposed in Sec. \ref{secadaptive}. The effectiveness of the proposed design scheme is verified in the application of vehicle platooning with avoiding collisions in Sec. \ref{secexample}. Conclusion and future work are presented in Sec. \ref{con}.
\subsection{Notation} 
\begin{itemize}
    \item  Let $U \in \mathbb{R}^n$ be a set with non-empty interior and let $\Omega \in \mathbb{R}$ be a set. By $\mathbb{C}^0(U; \Omega)$, we denote a class of continuous mappings on $U$, which takes values in $\Omega$. By $\mathbb{C}^k(U; \Omega)$, where $k \ge 1$, we denote the class
of continuous functions on $U$, which have continuous
derivatives of order $k$ on $U$ and take values in $\Omega$.
 	\item The notation $f^{(i)}(t)$ denote $i$  times derivatives of $f$, $u_t^{(i)}(x,t)$, $u_x^{(i)}(x,t)$  denote $i$ times derivatives with respect to $x$  and  with respect to $t$  of  $u(x,t)$ respectively.
  \item Define ${\underline{y}_i(t)}:=[y_1(t),y_2(t),\cdots,y_i(t)]^T$, and $\underline{s}^{(i)}(t):=[s(t),s^{(1)}(t),\cdots,s^{(i)}(t)]^T$
	\item For $n$-vector, the norm $|\cdot|$ denotes the usual Euclidean norm. For square-integrable, measurable functions $u:[0,1]\times\mathbb{R}\rightarrow\mathbb{R}$, the norm $\| u(t)\|:=(\int_{0}^{1}u(x,t)^2\diff x) ^\frac{1}{2}<+\infty $.
	\item  The symbol $e_i$ denotes that n-dimensional unit vector with {\it i} th entry as 1 and other entries are zero, i.e., $e_i:=[\underbrace{0,\cdots,0}_{i-1}, 1,0,\cdots,0]_{1\times n}$.
\end{itemize}
\section{Problem Formulation}\label{sec2}
We consider the following $n+m$ relative order strict-feedback nonlinear system with an unknown state delay $D$, whose position and length are arbitrary:
\begin{align}
	\dot{y_i}(t)&=y_{i+1}(t)+\psi_i({\underline{y}_i}),\, i=1,\cdots,n-1\label{equ1}\\
	\dot{y_n}(t)&=\psi_n({\underline{y}_n})+bx_1(t-D),	\label{equ2}\\
	\dot{x_j}(t)&=x_{j+1}(t)+\varphi_j({\underline{x}_j}), j=1,\cdots,m-1\label{equ3}\\
	\dot{x_m}(t)&=\varphi_m({\underline{x}_m})+U(t),	\label{equ4}
\end{align}
where $Y^T(t)=[y_1,y_2,\cdots,y_n]\in \mathbb{R}^n$ is the state of the ``post-delay" subsystem and  $X^T(t)=[x_1,x_2,\cdots,x_m]\in \mathbb{R}^m $ is the state of the ``pre-delay" subsystem. The unknown delay $D>0$ between the two subsystems satisfies Assumption \ref{assum2}.  The signal  $y_1(t)$ is the output of the overall plant, and the scalar $U(t)$ is the control input to be designed. The nonzero constant $b$ is arbitrary. 
Physically, the $X$-system \eqref{equ3}, \eqref{equ4} driven by the control input describes a nonlinear actuator, i.e., $X$-actuator, whose actuation reaches the nonlinear $Y$-plant subject to an unknown delay. 
\begin{assumption}
	The bound of the unknown parameter is known and arbitrary, i.e., $0<\underline{D}\leq D\leq\overline{D}$, where positive constants $\underline{D}$, $\overline{D}$ are arbitrary and known. \label{assum2}
\end{assumption}
Besides, we make the following assumption about the smoothness of the nonlinear functions $\psi_i$ and  $\varphi_j$, considering a high relative degree plant is dealt with.\par
\begin{assumption}
	The nonlinearity terms $\psi_i({\underline{y}_i})$ in \eqref{equ1}, \eqref{equ2} are $n+m-i$ times continuously differentiable and $\varphi_j({\underline{x}_j})$ in \eqref{equ3}, \eqref{equ4} are $m-j$ times continuously differentiable in all their arguments, and $\psi_i(\textbf{0})=0$, $\varphi_j(\textbf{0})=0$.  \label{assum1}
\end{assumption} 
\textbf{Control objective:} Under the unknown delay $D$,  design a controller $U(t)$ to exponentially regulate the output state $y_1(t)$ to track the target trajectory $s(t)$ and ensure  
\begin{align}
	y_1(t)-s(t)\ge 0,\quad \forall t\ge 0 \label{safe}
\end{align}
i.e., safety defined in this paper, while ensuring that all plant states are bounded. Moreover, when $s(t)\equiv 0$, the exponential convergence to zero of all states in the overall plant is guaranteed. 

Because the plant is $n+m$ relative order, we impose the following assumption regarding the required smoothness of the target trajectory $s(t)$.
\begin{assumption}\label{as:sa}
    The given target trajectory $s(t)$ is $n+m$ times continuously differentiable.
\end{assumption}
For reducing the reading burden in the design process, we denote the distal $Y$ ODE \eqref{equ1}, \eqref{equ2} as
\begin{align}
	\dot Y(t)=f(Y(t),bx_1(t-D)). \label{org11}
\end{align}
Because there is no control actuation on the $Y$-subsystem before $t=D$, we require the following initial condition assumption ensuring the boundedness and safety of  $Y$-subsystem on this no control period $t\in[0,D]$, which is necessary for safe delay-compensated control in nonlinear systems.
\begin{assumption}\label{assum4}
The initial values of plant states satisfy 
  \begin{align}
    P(\theta)<\infty, \label{eq:assum3}
 \end{align}
 and
 \begin{align}
     e_1P(\theta)>s(\theta+D), \label{eq:assum4}
 \end{align}
 for $\theta\in[-D,0]$,
 where
 \begin{align}
	P(\theta)=Y(0)+\int_{-D}^{\theta}f(P(\sigma),bx_1(\sigma))\diff \sigma,\label{pred2}
\end{align}
and where $e_1=[1,0,\cdots,0]_{1\times n}.$
\end{assumption}
Assumption \ref{assum4} is a sufficient and necessary condition of the simulation that the state is still bounded and in the safe region on the uncontrolled period: $t\in[0,D]$, i.e., before the control action kicks in.
Specifically, \eqref{eq:assum3} is a sufficient and necessary condition of the situation that $Y(t)$ does not diverge to infinity within the time interval $[0, D]$,
and \eqref{eq:assum4} is a sufficient and necessary condition of the situation that  $y_1(t)$  remains in the safe region for $t\in[0, D]$. In practice, the solution $P(\theta)$ in \eqref{pred2} can be obtained by the finite difference method, which only depends on the initial conditions $Y(0)$ and initial input history for $Y$-system, i.e., $x_1(t),t\in[-D,0]$. Actually, $P(\theta)$  is the initial condition of the predictor state of $Y$, which will be seen clearly in Sec. \ref{subsec2}.

Besides, we require the following assumption that restricts the actuator state beginning within the region of safe regulation.  
\begin{assumption}
	The initial value of the transport actuator state $x_1(t)$ satisfies $bx_1(0)>\Delta(0)$, where  the value $\Delta(0)$ will be given by \eqref{delta} in Sec. \ref{subsec2}.\label{assum5}
\end{assumption}
Defining 
\begin{align}
	u(x,t)=bx_1(t-D+Dx), \label{uxt}
\end{align}
the delay can be modeled as a transport PDE, and \eqref{equ1}, \eqref{equ2} is rewritten as 
\begin{align}
	\dot Y(t)&=f(Y(t),u(0,t)),\label{org2}\\
	Du_t(x,t)&=u_x(x,t),\label{org3} \\
	u(1,t)&=b{x_1}(t),\label{eq:ux1}
\end{align}
for $x \in [0,1], t\in [0,+\infty)$. Now the overall plant is \eqref{org2}--\eqref{eq:ux1} with \eqref{equ3}, \eqref{equ4}, on the basis of which the control design and stability analysis will be conducted.  
\section{Safe Delay-Compensated Controller} \label{sec3}

\subsection{{Nonundershooting backstepping transformation for the $Y$} part}\label{sec1}

Following \cite{safe1}, we introduce the safe backstepping transformation 
\begin{align}
	&z_i(t)=y_i(t)-h_{i-1}-s^{(i-1)}(t),\label{zi}\\
	&h_0=0,\,h_1=-k_1z_1(t)-\psi_1,\label{eq:h01}\\
	& h_i(\underline{y}_i(t),\underline{s}^{(i-1)}(t))=-k_iz_i(t)-\psi_i({\underline{y}_i})\notag\\
	&+\sum_{k=1}^{i-1}\left[\frac{\partial h_{i-1}}{\partial y_k}\big(y_{k+1}(t)+\psi_k\big)+\frac{\partial h_{i-1}}{\partial s^{(k-1)}}s^{(k)}(t)\right],\notag\\
		&i=1,\cdots,n\label{hi}
\end{align}
where the positive design parameters $k_1,\cdots,k_n$  are to be determined later to ensure safety. We then arrive at
the transformed target $Z$-system given by
\begin{align}
	\dot{z}_i(t)&=-k_iz_i(t)+z_{i+1}(t),\quad i=1,\cdots,n-1 \label{tar4}  \\
	\dot{z}_n(t)&=-k_nz_n(t)-h_n-s^{(n)}(t)+u(0,t). \label{tar5}
\end{align}
The transformed states $Z(t)=[z_1(t),\cdots,z_n(t)]^T$ are indeed high-relative-degree CBFs in \cite{safe5}. Considering the control action begins to regulate the $Y$-system from $t=D$, the CBFs $Z(t)$ need to be kept nonnegative for $t\in[D,\infty]$ in the control design to ensure the safety. \par

\subsection{Predictor-based nonundershooting backstepping transformation for the $X$ part}\label{subsec2}
At the beginning of this subsection, we introduce predictor states $P^T(t)=[P_1(t),\cdots,P_n(t)]$, which denote the $D$ time units ahead predictor of $Y(t)$, i.e.,
\begin{align}
    P(t)=Y(t+D).\label{eq:preP}
\end{align}
These predictor states can be generated by integrating \eqref{org11} from $t$ to $t+D$, as follows: 
\begin{align}
	P(t)=Y(t)+\int_{t-D}^{t}f(P(\sigma),bx_1(\sigma))\diff \sigma,\label{pred1}
\end{align}
with the initial condition given by \eqref{pred2}.
\begin{remark}\label{remark1}
	{\rm In practical implementation, the predictor state $P(t)$ from the above implicit relation \eqref{pred1}, \eqref{pred2} can be solved by the finite difference method, i.e., $P(dk)=Y(dk)+\sum_{i=k-N}^{k-1}f(P(di),bx_1(di))d$, where $d=D/N$ is the discretization step for the integral, $N$ is a free positive integer (a larger $N$ means more accumulations which improve the accuracy of integral but need more computation)}. 
\end{remark}
Based on the predictor \eqref{pred1}, \eqref{pred2}, we define an auxiliary system as
\begin{align}
Dp_t(x,t)&=p_x(x,t),\label{pred4}\\	
 p(1,t)&=P(t),\label{ini1}\\
 p(x,0)&=P(D(x-1)), x\in[0,1]\label{ini2}
\end{align}
whose solution $p^T(x,t)=[p_1(x,t),\cdots,p_n(x,t)]$ is
\begin{equation}
	p(x,t)=P(t-D+Dx),~ x\in[0,1]\times t\in[0,\infty).\label{sol}
\end{equation} 	
Relying on the auxiliary system, applying the following transformation
\begin{align}
		w(x, t)=&u(x, t)-h_n(p(x,t),\underline{s}^{(n-1)}(t+Dx))\notag\\
                &-s^{(n)}(t+Dx)\label{ker}
\end{align}
for $x\in[0,1]$ and  the nonundershooting backstepping transformation
\begin{align}
	&r_j(t)=bx_j(t)-\tau_{j-1}-\Delta^{(j-1)}(t),\label{ri}\\
	&\tau_0=0,\,\tau_1=-c_1r_1(t)-b\varphi_1,\label{tau1}\\
	&\tau_j(\underline{x}_j(t), \underline{\Delta}^{(j-1)}(t))=-c_jr_j(t)-b\varphi_j({\underline{x}_j}) \notag \\
	&+\sum_{k=1}^{j-1}\left[\frac{\partial \tau_{j-1}}{\partial x_k}\big(x_{k+1}(t)+\varphi_k\big)+\frac{\partial \tau_{j-1}}{\partial\Delta^{(k-1)}}\Delta^{(k)}(t)\right],\notag\\
	& j=1,\cdots,m\label{tau}
\end{align}
where  $c_1,\cdots,c_m$ are some positive design parameters to be determined later, and where 
\begin{equation}
		\Delta(t)=	h_n(P(t),\underline{s}^{(n-1)}(t+D))+s^{(n)}(t+D),\label{delta}
\end{equation}
we arrive at the target system
\begin{align}
	\dot{z}_i(t)&=-k_iz_i(t)+z_{i+1}(t),i=1,\cdots,n-1\label{obj1}    \\
	\dot{z}_n(t)&=-k_nz_n(t)+w(0,t),	\label{obj2}\\
	Dw_t(x,t)&=w_x(x,t),\label{obj3}\\
	w(1,t)&=r_1(t),\label{obj4}\\
	\dot{r}_j(t)&=-c_jr_j(t)+r_{j+1}(t),j=1,\cdots,m-1   \label{obj5} \\
	\dot{r}_m(t)&=-c_mr_m(t),		\label{obj6}
\end{align}
with choosing the control law as
\begin{align}
	U(t)&=\frac1b\left(\tau_m(\underline{x}_m(t), \underline{\Delta}^{(m-1)}(t))+\Delta^{(m)}(t)\right)\notag\\
		&:=\mathcal{U}(t,D).	\label{u}
\end{align}
The above nominal control law $U(t)$  is constructed using the plant states $X(t)$, the predictor states $P(t)$, and the derivatives of the trajectory function $s(t)$ up to $(n+m)$ order at the future moment $t+D$, i.e., $\underline{s}^{(n+m)}(t+D)$. We write $D$ as an argument of the function $\mathcal{U}$ \eqref{u} because the predictor $P(t)$ and the trajectory functions $\underline{s}^{(n+m)}(t+D)$ depend on it.
\subsection{Inverse transformations}

In this subsection, we derive the inverse transformations of the transformations in Secs. \ref{sec1}, \ref{subsec2}, i.e., \eqref{zi}--\eqref{hi}, \eqref{ker}, \eqref{ri}--\eqref{tau}, converting the target system back to the original system. First, we show the predictors $\delta^T(x,t)=[\delta_1(x,t),\cdots,\allowbreak\delta_n(x,t)]$ of the transformed states $Z(t)$ in \eqref{obj1}, \eqref{obj2} as follows:
\begin{align}
\delta(x,t)&=Z(t+Dx)\notag\\&=e^{DAx}Z(t)+D\int_{0}^{x}e^{DA(x-y)}Bw(y,t)\diff y \label{inverpred}
\end{align} for $x\in[0,1]\times t\in[0,\infty)$, where
\begin{align}
	A=\begin{pmatrix}
		-k_1&1&0&0&\cdots&0&0\\
		0&-k_2&1&0&\cdots&0&0\\
		&&\ddots&&\vdots&\vdots\\
		0&0&0&0&\cdots&-k_{n-1}&1\\
		0&0&0&0&\cdots&0&-k_n
	\end{pmatrix}_{n\times n}\!\!\!\!\!\!\!\!\!,B=\begin{pmatrix}
		0\\
		0\\
		\vdots\\
		0\\
		1
	\end{pmatrix}_{n\times 1}\!\!\!\!\!\!\!,
\end{align}
which will be used in building the inverse transformations.
\begin{proposition}\label{leminver1}
The inverse transformations of \eqref{zi}--\eqref{hi} are 
\begin{align}
&y_i(t)=z_i(t)+\bar{h}_{i-1}+s^{(i-1)}(t),\label{inveryi}\\
&\bar h_0=0,\,h_1=-k_1z_1(t)-\bar\psi_1,\label{invery0}\\
&\bar{h}_i(\underline{z}_i(t),\underline{s}^{(i-1)}(t))=-k_iz_i(t)-\bar{\psi}_i(\underline{z}_i,\underline{s}^{(i-1)})\notag\\
&+\sum_{k=1}^{i-1}\big(\frac{\partial \bar{h}_{i-1}}{\partial z_k}\big(-k_kz_k+z_{k+1}\big)+\frac{\partial \bar{h}_{i-1}}{\partial s^{(k-1)}}s^{(k)}(t)\big).\notag\\
	&i=1,\cdots,n\label{inverh}
\end{align}
where the functions  $\bar{h}_i(\underline{z}_i,\underline{s}^{(i-1)})$, $\bar{\psi}_i(\underline{z}_i,\underline{s}^{(i-1)})=\psi_i(\underline{y}_i)$ are continuously differentiable in all their arguments and satisfy $\bar{h}_i(0,0)=0$, $\bar{\psi}_i(0,0)=0$.
The inverse of the transformation \eqref{ker} is 
\begin{align}
	u(x, t)=&w(x, t)+\bar{h}_n(\delta(x,t),\underline{s}^{(n-1)}(t+Dx))\notag\\
	&+s^{(n)}(t+Dx),\label{inverker}
\end{align}
where $\delta(x,t)$ is given by \eqref{inverpred}.
	The inverse transformations of \eqref{ri}--\eqref{tau} are 
	\begin{align}
		&x_j(t)=\frac1br_j(t)+\bar{\tau}_{j-1}+\frac1b\bar\Delta^{(j-1)}(t),\label{inver ri}\\
		&\bar\tau_0=0,\,\bar\tau_1=-\frac{c_1}{b}r_1(t)-\bar{\varphi}_1,\label{inver tau1}\\
		&\bar{\tau}_j(\underline{r}_j(t), \underline{\bar{\Delta}}^{(j-1)}(t))=-\frac{c_j}{b}r_j(t)-\bar{\varphi}_j({\underline{r}_j},\underline{\bar{\Delta}}^{(j-1)}) \notag \\
		&+\sum_{k=1}^{j-1}\left[\frac{\partial \bar{\tau}_{j-1}}{\partial r_k}\big(-c_kr_k+r_{k+1}\big)+\frac{\partial \bar{\tau}_{j-1}}{\partial\bar{\Delta}^{(k-1)}}\bar{\Delta}^{(k)}(t)\right],\notag\\
		& j=1,\cdots,m\label{invertau}
	\end{align}
	where
 \begin{equation}
 \bar\Delta(t)=\bar{h}_n(\delta(1,t),\underline{s}^{(n-1)}(t+D))+s^{(n)}(t+D),\label{inverdelta}
 \end{equation} 
 and the functions $\bar{\tau}_j(\underline{r}_j, \underline{\bar{\Delta}}^{(j-1)})$,  $\bar{\varphi}_j({\underline{r}_j}$,$\underline{\bar{\Delta}}^{(j-1)})=\varphi_j({\underline{x}_j})$ are continuously differentiable in all their arguments and satisfy $\bar{\tau}_j(0,0)=0$, $\bar{\varphi}_j(0,0)=0$.
\end{proposition}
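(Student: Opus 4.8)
The plan is to regard each of the three transformations as a triangular change of coordinates and to invert it by induction, checking along the way that the functions $\bar h_i$ and $\bar\tau_j$ produced by the stated recursions are precisely $h_i$ and $\tau_j$ rewritten in the new coordinates. I would treat the $Y$-transformation \eqref{zi}--\eqref{hi} first, then the transport transformation \eqref{ker}, and finally the $X$-transformation \eqref{ri}--\eqref{tau}; the last is word-for-word the first under the dictionary $(y,h,z,\psi,s^{(i-1)},k_i)\leftrightarrow(bx,\tau,r,b\varphi,\Delta^{(j-1)},c_j)$, and the transport part reduces to the $Y$-part evaluated on the predictor states.

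For the $Y$-part I would first note that, with $\underline s^{(n-1)}$ frozen, each $z_i$ in \eqref{zi}--\eqref{hi} is $y_i$ plus a function of $\underline y_{i-1}$ and $\underline s^{(i-2)}$; hence the coordinate map $\Phi:(\underline y_n,\underline s^{(n-1)})\mapsto(\underline z_n,\underline s^{(n-1)})$ is triangular in the block $\underline y_n\leftrightarrow\underline z_n$ and is inverted by solving \eqref{zi} recursively for $y_i$ and back-substituting $y_1,\dots,y_{i-1}$, which gives \eqref{inveryi}. It then remains to show that $\bar h_i:=h_i\circ\Phi^{-1}$ obeys the recursion \eqref{inverh} and that $\bar\psi_i:=\psi_i\circ\Phi^{-1}$ has the claimed regularity. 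I would prove the recursion by induction on $i$: the base cases $\bar h_0=0$ and $\bar h_1=-k_1z_1-\bar\psi_1$ are immediate from \eqref{eq:h01}, and for the step I would introduce the ``plant-plus-reference'' vector field $V=\sum_k(y_{k+1}+\psi_k)\partial_{y_k}+\sum_k s^{(k)}\partial_{s^{(k-1)}}$ together with its $z$-counterpart $W=\sum_k(-k_kz_k+z_{k+1})\partial_{z_k}+\sum_k s^{(k)}\partial_{s^{(k-1)}}$. The crucial computation is $\Phi_*V=W$: differentiating $z_k=y_k-h_{k-1}-s^{(k-1)}$ along $V$ and using \eqref{hi} in the form $Vh_{k-1}=h_k+k_kz_k+\psi_k$ gives $Vz_k=z_{k+1}-k_kz_k$. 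Since $\Phi$-relatedness means $V(g\circ\Phi)=(Wg)\circ\Phi$ for any $g$, composing \eqref{hi} with $\Phi^{-1}$ turns $-k_iz_i-\psi_i$ into $-k_iz_i-\bar\psi_i$ and the bracketed sum into $\sum_{k=1}^{i-1}\big(\frac{\partial\bar h_{i-1}}{\partial z_k}(-k_kz_k+z_{k+1})+\frac{\partial\bar h_{i-1}}{\partial s^{(k-1)}}s^{(k)}\big)$, i.e.\ exactly the right-hand side of \eqref{inverh}. The continuous differentiability of $\bar h_i,\bar\psi_i$ and their vanishing at the origin I would carry through the same induction; each level composes, differentiates once, and adds $\psi_i$, so Assumption \ref{assum1}'s prescription of $n+m-i$ derivatives of $\psi_i$ is exactly what is consumed.

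For the transport transformation I would use that $p(x,t)=Y(t+Dx)$ by \eqref{sol}, \eqref{eq:preP} while $\delta(x,t)=Z(t+Dx)$ by \eqref{inverpred}, so with the shifted reference $\underline s^{(n-1)}(t+Dx)$ the pairs $(p(x,t),\underline s^{(n-1)}(t+Dx))$ and $(\delta(x,t),\underline s^{(n-1)}(t+Dx))$ are related by $\Phi$; the $i=n$ case above then gives $h_n(p(x,t),\cdot)=\bar h_n(\delta(x,t),\cdot)$, and solving \eqref{ker} for $u(x,t)$ yields \eqref{inverker}. The $X$-part I would not redo from scratch: under the dictionary above, \eqref{ri}--\eqref{tau} is triangular in $b\underline x_m\leftrightarrow\underline r_m$, the same vector-field identity holds with $(k_k,\psi_k,s^{(k-1)})$ replaced by $(c_k,b\varphi_k,\Delta^{(k-1)})$, and the inversion produces \eqref{inver ri}--\eqref{invertau}, with $\bar\Delta$ the image of $\Delta$ under the $X$-side coordinate change $\Psi$, i.e.\ \eqref{inverdelta}, and with $\bar\varphi_j$ continuously differentiable and vanishing at the origin.

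The step I expect to be the real obstacle is the second half of each induction — verifying that the recursion generating $\bar h_i$ (and $\bar\tau_j$) reproduces $h_i\circ\Phi^{-1}$ (and $\tau_j\circ\Psi^{-1}$) term by term — and this rests entirely on the identity $\Phi_*V=W$, i.e.\ that in the transformed coordinates the combined plant-and-reference dynamics collapses to the Hurwitz chain $\dot z_k=-k_kz_k+z_{k+1}$. Everything else — the triangularity argument for invertibility, the chain-rule bookkeeping of which $s^{(k)}$ and $\Delta^{(k)}$ appear at each level, and tracking the differentiability orders so that Assumption \ref{assum1} suffices — is routine once that identity is in hand.
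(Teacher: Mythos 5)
Your proposal is correct, and at its core it runs on the same engine as the paper's proof: an induction in which the bracketed sum in \eqref{hi} is differentiated through the coordinate change using the fact that the transformed dynamics is the Hurwitz chain $\dot z_k=-k_kz_k+z_{k+1}$. The paper does this constructively—assuming \eqref{inveryi} up to level $i$, it writes $y_{i+1}=\dot y_i-\psi_i$, substitutes the target dynamics, and reads off the recursion \eqref{inverh}—whereas you first establish invertibility abstractly from the triangular structure, define $\bar h_i:=h_i\circ\Phi^{-1}$, $\bar\psi_i:=\psi_i\circ\Phi^{-1}$, and then verify the recursion by pushing the plant-plus-reference vector field forward, $\Phi_*V=W$; your identity $Vz_k=-k_kz_k+z_{k+1}$ is exactly the computation the paper performs pointwise along solutions, so the difference is packaging rather than substance, and your version makes explicit why the same recursion reappears in barred variables. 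Where you genuinely diverge is the transport step: the paper re-derives \eqref{inverker} by differentiating \eqref{inveryi} at $i=n$ evaluated on the predictor states and invoking the $\delta_n$-dynamics from \eqref{obj2}, while you simply evaluate the already-proved identity $h_n=\bar h_n\circ\Phi$ at the time-shifted arguments, using $p(x,t)=Y(t+Dx)$ and $\delta(x,t)=Z(t+Dx)$ from \eqref{sol}, \eqref{eq:preP}, \eqref{inverpred}, and solve \eqref{ker} for $u$—a shorter and cleaner route. Your treatment of the $X$-part by the dictionary $(y,h,z,\psi,s^{(i-1)},k_i)\leftrightarrow(bx,\tau,r,b\varphi,\Delta^{(j-1)},c_j)$ matches the paper's ``recursive process similar to'' argument; only your phrase that $\bar\Delta$ is the image of $\Delta$ under the $X$-side map is imprecise—$\bar\Delta$ in \eqref{inverdelta} is $\Delta$ rewritten through the $Y$-side identification $h_n(P(t),\cdot)=\bar h_n(\delta(1,t),\cdot)$, which is exactly what your transport step supplies, so nothing breaks.
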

\begin{proof}
    The proof is shown in  Appendix \ref{dix1}.
\end{proof}
 According to \eqref{inveryi}--\eqref{inverh}, \eqref{inverker}, \eqref{inver ri}--\eqref{invertau},
we have built the invertible transformation between the $(Y,u,X)$-original system with the predictor depending on $Y,u$ and the $(Z,w,R)$-target system with the predictor $\delta$ depending on $Z,w$, which is illustrated in Fig. \ref{fig trans}.
The invertibility built here will be used to prove the stability of the closed-loop system.
\begin{figure}[t]
	\centering	
	\includegraphics[width=1\linewidth, height=0.1\textheight]{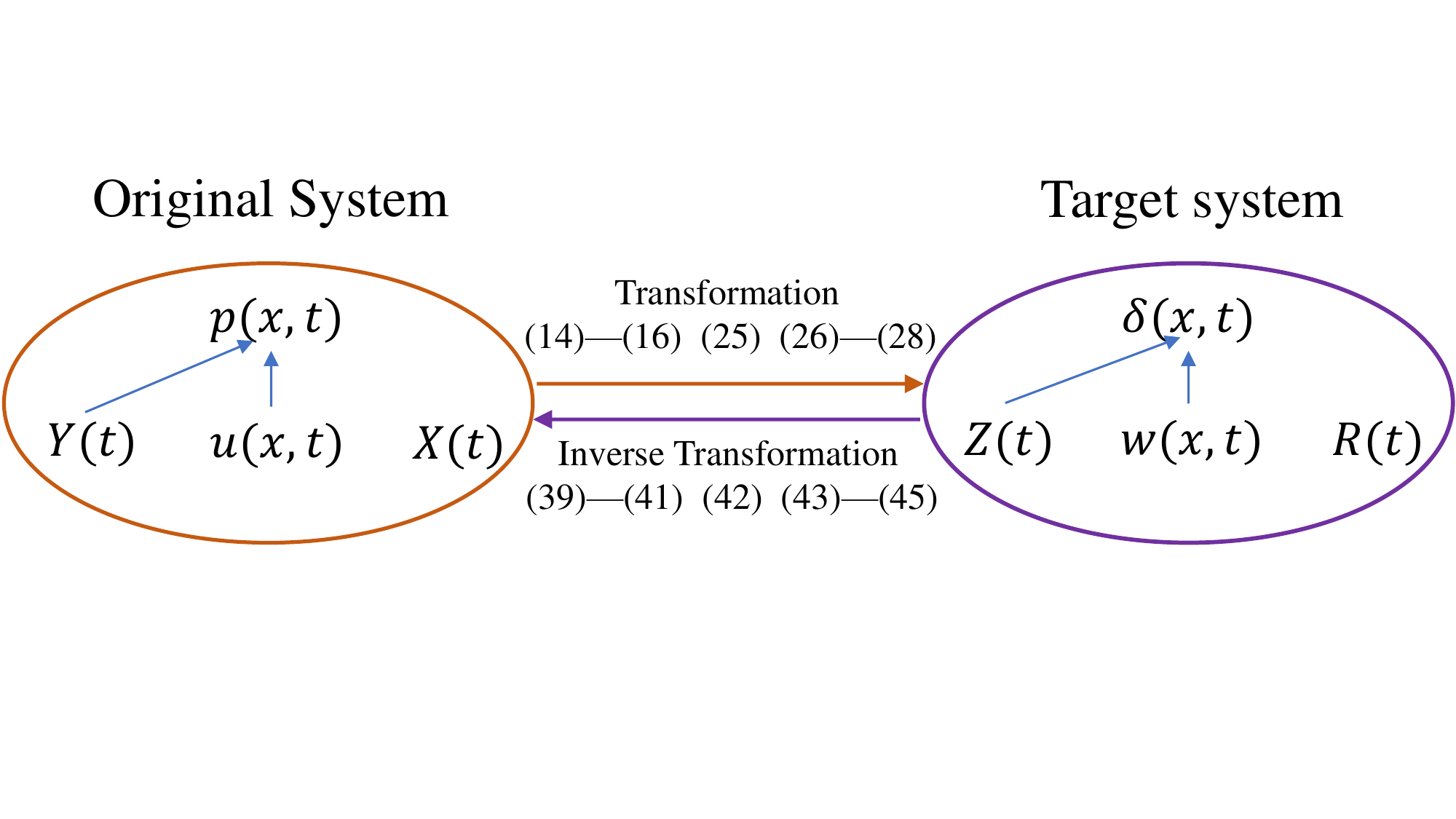}
	\caption{ The transformation between the $Y,u,X$-original  and $Z,w,R$-target systems with the predictors $p,\delta$.}		
	\label{fig trans}
\end{figure}

\subsection{Selection of Design Parameters} \label{selection}
We need to select the design parameters under certain conditions to ensure safety, as shown in this subsection.
We select the parameters $k_1,\cdots,k_n$ to satisfy
\begin{align}
	k_i>\max\{2,\check{k}_i\},i=1,\cdots,n-1,\quad k_n>1\label{ki},
\end{align}
where
\begin{align}
		\check{k}_i&=\frac{1}{P_i(0)-h_{i-1}(\underline{P}_{i-1}(0),\underline{s}^{(i-2)}(D))-s^{(i-1)}(D)}\notag\\
		&\times\Bigl[-P_{i+1}(0)-\psi_i({\underline{P}_i(0)})+s^{(i)}(D)\notag\\
		&+\sum_{k=1}^{i-1}\big(\frac{\partial h_{i-1}}{\partial P_k}\bigl(P_{k+1}(0)+\psi_k\bigl)+\frac{\partial h_{i-1}}{\partial s^{(k-1)}}s^{(k)}(D)\big)\Bigl],\notag\\
		&i=1,\cdots,n-1 \label{k}
\end{align}
and where $P_i(0)=e_iP(0)$ is given in \eqref{pred2}. The purpose of this design parameter selection is to ensure that the CBF states $z_i(t),i=1,\cdots,n$ are positive at the initial time $t=D$ for control action at the distal ODE \eqref{equ1}, \eqref{equ2}. This will be seen clearly in the proof of Lemma \ref{lemmasafe2}.

The design parameters $c_1,\cdots,c_m$ are selected as:
\begin{align}
c_j>\max\{2,\check{c}_j\},j=1,\cdots,m-1,\quad c_m>1,\label{ci}
\end{align}
where
\begin{align}
		\check{c}_j&=\frac{1}{bx_j(0)-\tau_{j-1}(\underline{x}_{j-1}(0), \underline{\Delta}^{(j-2)}(0))-\Delta^{(j-1)}(0)}\notag\\
		&\times \Big[-bx_{j+1}(0)-b\varphi_j({\underline{x}_j}(0))+\Delta^{(j)}(0)\notag\\
		&+\sum_{k=1}^{j-1}\big(\frac{\partial \tau_{j-1}}{\partial x_k}(x_{k+1}(0)+\varphi_k)+\frac{\partial \tau_{j-1}}{\partial \Delta^{(k-1)}}\Delta^{(k)}(0)\big)\Big],\notag \\
		&j=1,\cdots\,m-1. \label{c}
\end{align} 
This selection is to make the values of CBF states $r_j(t),j=1,\cdots,m$  positive at the initial time $t=0$, which will be seen clearly in the proof of  Lemma \ref{lemmasafe1}. Additionally, it also contributes to the exponential regulation of all plant states,  which will be shown in Lyapunov analysis \eqref{lya}. 

Please note that the selection of design parameters $k_1,\cdots, k_n,c_1,\cdots,c_m$ are only determined by the predictor and 
 plant initial values: $P(0)$, $X(0)$, and the denominators in \eqref{k}, \eqref{c} are nonzero that can be seen in the proofs of Lemmas \ref{lemmasafe1}, \ref{lemmasafe2}. 

\subsection{Result with the nominal safe delay-compensated control}\label{result}
\begin{theorem}\label{theo1}
	For initial data $Y(0)\in\mathbb{R}^n$, $X(0)\in\mathbb{R}^m$, the initial input history $x_1(t)\in \mathbb{C}^{m-1}([-D,0])$ satisfying Assumptions \ref{assum4}, \ref{assum5}, and a target trajectory $s(t)$ satisfying Assumption \ref{as:sa}, choosing the design parameters $k_1,\cdots,k_n,c_1,\cdots,c_m$ satisfying \eqref{ki}--\eqref{c}, the closed-loop system \eqref{equ1}--\eqref{equ4} with the nominal controller \eqref{u} and \eqref{pred1}, \eqref{pred2} has the following properties
	\begin{enumerate}
		\item {The output $y_1(t)$ exponentially tracks the target trajectory $s(t)$ in the sense that $|y_1(t)-s(t)|$ exponentially converges to zero, and all plant states, i.e., 
  \begin{align}
\Psi(t)=&|X(t)|+|Y(t)|+\max_{x\in[0,1]}|p(x,t)|\notag\\
&+\left(\int_{t-D}^t x_1(\tau)^2d\tau\right)^{\frac{1}{2}}\label{eq:Psi}
  \end{align} 
  is bounded, and the ultimate bound depends on the target trajectory. If the target trajectory $s(t)\equiv0$, then $\Psi(t)$ is exponentially convergent to zero.}
		\item {The safety is enforced in the sense that $y_1(t)-s(t)\geq0$ holds on $t\geq 0$}. 
	\end{enumerate}
\end{theorem}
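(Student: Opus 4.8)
The plan is to carry the entire analysis over to the target variables $(Z,w,R)$ governed by \eqref{obj1}--\eqref{obj6}, which by Proposition \ref{leminver1} are connected to the original variables through the invertible change of coordinates \eqref{zi}--\eqref{hi}, \eqref{ker}, \eqref{ri}--\eqref{tau} and its inverse \eqref{inveryi}--\eqref{invertau}. The structural fact I will exploit is that, once the control \eqref{u} is in force, the target system \eqref{obj1}--\eqref{obj6} is \emph{autonomous and linear}: the $R$-block is a stable lower-bidiagonal cascade, the transport equation \eqref{obj3}--\eqref{obj4} merely delays $r_1(t)$ by $D$ time units, and the $Z$-block is a stable cascade forced by $w(0,t)$. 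Safety (Claim 2) and exponential regulation (Claim 1) then become two decoupled questions about this one linear cascade, and I will treat them in that order.

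\textbf{Safety.} Since $h_0=0$, \eqref{zi} gives $z_1(t)=y_1(t)-s(t)$, so it suffices to show $z_1(t)\ge0$ for all $t\ge0$, which I would split at $t=D$. On $[0,D]$ the $Y$-subsystem runs open-loop, so integrating \eqref{org11} and comparing with \eqref{pred2} yields the identity $Y(t)=P(t-D)$; then \eqref{eq:assum4} of Assumption \ref{assum4} gives $y_1(t)=e_1P(t-D)>s(t)$ outright. For $t\ge D$ I would use a nonundershooting/positivity-propagation argument: $r_m(t)=e^{-c_mt}r_m(0)$ and, inductively via \eqref{obj5}, each $r_j(t)$ stays positive provided $r_j(0)>0$; hence $w(1,t)=r_1(t)>0$, so by the transport solution $w(x,t)=w(1,t-D(1-x))>0$ for $t\ge D$, in particular $w(0,t)>0$; substituting this into \eqref{obj2} and moving up the chain \eqref{obj1}, each $z_i(t)$ stays positive on $[D,\infty)$ provided $z_i(D)>0$. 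What remains are the two initializations. The base cases $r_1(0)=bx_1(0)-\Delta(0)>0$ and $z_1(D)=P_1(0)-s(D)>0$ are exactly Assumptions \ref{assum5} and \ref{assum4} (the latter at $\theta=0$, using $Y(D)=P(0)$). For the inductive steps I would evaluate $z_{i+1}(D)$ from \eqref{hi} and $r_{j+1}(0)$ from \eqref{tau} at the respective initial times with $Y(D)=P(0)$: each is an affine function of $k_i$ (resp.\ $c_j$) with slope $z_i(D)>0$ (resp.\ $r_j(0)>0$), made positive precisely by $k_i>\check k_i$ in \eqref{ki}--\eqref{k} (resp.\ $c_j>\check c_j$ in \eqref{ci}--\eqref{c}). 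The induction is well posed because the denominator of $\check k_i$ equals $z_i(D)$ (that of $\check c_j$ equals $r_j(0)$), already shown positive using only the previously fixed parameters; the argument in fact yields the strict inequality $y_1(t)-s(t)>0$.

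\textbf{Exponential regulation and boundedness.} On the target system I would run a backstepping Lyapunov functional of the form $V(t)=Z(t)^T\mathcal{P}Z(t)+b_1\int_0^1(1+x)w(x,t)^2\diff x+R(t)^T\mathcal{Q}R(t)$, with $\mathcal{P},\mathcal{Q}$ positive definite solving the Lyapunov equations of the $Z$- and $R$-cascades (solvable thanks to $k_i>2$, $k_n>1$, $c_j>2$, $c_m>1$ in \eqref{ki}, \eqref{ci}), $b_1>0$ chosen large enough to absorb the cross-term $w(0,t)$ entering through \eqref{obj2}, and the $R$-weight then scaled up to dominate the residual $r_1(t)^2$ boundary term from $w(1,t)=r_1(t)$; completing the squares gives $\dot V\le-\eta V$ for some $\eta>0$, hence exponential decay of $|Z(t)|+\|w(t)\|+|R(t)|$, with $V(0)<\infty$ because the smoothness Assumptions \ref{assum1}, \ref{as:sa} make the transformations and the initial target data well defined. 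Pulling this back through \eqref{inveryi}--\eqref{invertau}: the inverse maps are continuous and vanish at the origin, the predictor $\delta$ of \eqref{inverpred} is bounded by a constant multiple of $|Z(t)|+\|w(t)\|$, and $\max_{x\in[0,1]}|p(x,t)|$ together with $\int_{t-D}^{t}x_1(\tau)^2\diff\tau=\tfrac{D}{b^2}\|u(t)\|^2$ are controlled the same way; hence $\Psi(t)$ decays to a residual term determined by $s(t)$ and its derivatives up to order $n+m$ (thus bounded, and zero when $s\equiv0$), giving the claimed ultimate bound, while $|y_1(t)-s(t)|=|z_1(t)|\to0$ exponentially. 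Global-in-time existence of the closed-loop solution then follows from local existence plus these a priori bounds, with \eqref{eq:assum3} of Assumption \ref{assum4} ruling out finite-time blow-up of $Y$ on the uncontrolled window $[0,D]$.

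\textbf{Main obstacle.} The delicate part is the safety step: one must show the parameter rule \eqref{ki}--\eqref{c} is simultaneously \emph{feasible} (the denominators defining $\check k_i,\check c_j$ are nonzero) and \emph{sufficient} (it forces $z_i(D)>0$ and $r_j(0)>0$), and that these pointwise positivity facts propagate correctly through the ODE--transport--ODE cascade in spite of the uncontrolled interval $[0,D]$ on the $Y$-side; the Lyapunov/boundedness argument, being that of a stable linear autonomous cascade, is by comparison routine.
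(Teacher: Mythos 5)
Your safety argument is the paper's own: the same induction on $r_{j}(0)>0$ and $z_{i}(D)>0$ driven by the parameter rules \eqref{ki}--\eqref{c} (Lemmas \ref{lemmasafe1}, \ref{lemmasafe2}), the same positivity propagation through the transport equation, and the same use of $Y(t)=P(t-D)$ with Assumption \ref{assum4} on the uncontrolled window $[0,D]$. Your regulation strategy (Lyapunov analysis on the target system, then pull back through the inverse transformations of Proposition \ref{leminver1}) is also the paper's route. However, there is a genuine gap in the regulation step: the Lyapunov functional you propose, $Z^{T}\mathcal{P}Z+b_{1}\int_{0}^{1}(1+x)w^{2}\diff x+R^{T}\mathcal{Q}R$, controls only $|Z(t)|+\|w(\cdot,t)\|+|R(t)|$, and that is not enough to bound the $|X(t)|$ component of $\Psi(t)$ in \eqref{eq:Psi}. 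The inverse transformation \eqref{inver ri}--\eqref{invertau} expresses $x_{j}(t)$, $j\ge 2$, through $\bar{\Delta}^{(k)}(t)$, $k=1,\dots,j-1$, i.e.\ through time derivatives of $\bar{h}_{n}(\delta(1,t),\underline{s}^{(n-1)}(t+D))$, so you need decay (or at least boundedness) of $\delta_{t}^{(k)}(1,t)$ up to order $m-1$. Differentiating \eqref{inverpred} in time and using the transport equation produces boundary traces $w_{x}^{(i)}(0,t)$, $w_{x}^{(i)}(1,t)$ and the norms $\|w_{x}^{(i)}(\cdot,t)\|$ up to order $m$ (this is exactly \eqref{deltai}), none of which are controlled by the $L^{2}$ norm of $w$ alone. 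This is precisely why the paper's functional \eqref{v} carries the higher-order terms $\sum_{i=0}^{m}\int_{0}^{1}a_{i}e^{x}w_{x}^{(i)}(x,t)^{2}\diff x$, with the boundary contributions $w_{x}^{(i)}(1,t)=D^{i}r_{1}^{(i)}(t)$ absorbed via \eqref{bi}, and why Lemma \ref{lemma:deltarget} is stated for $\delta_{t}^{(i)}(1,t)$, $i=0,\dots,m$; the hypothesis $x_{1}\in\mathbb{C}^{m-1}([-D,0])$ is what makes these norms finite at $t=0$. As written, your sentence ``hence $\Psi(t)$ decays to a residual term'' does not follow for $|X(t)|$; to close the argument you must enlarge $V$ to the $H^{m}$-type norm of $w$ and prove the analogue of Lemma \ref{lemma:deltarget}. (A side remark: your identity $\int_{t-D}^{t}x_{1}(\tau)^{2}\diff\tau=\tfrac{D}{b^{2}}\|u(\cdot,t)\|^{2}$ is the correct change of variables; the paper's \eqref{equx1} drops the factor $D$, which is immaterial for the boundedness conclusions.)
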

Before presenting the proof, we propose the following lemmas regarding the exponential stability and non-negative CBFs in the target system \eqref{obj1}--\eqref{obj6}, which will be used in establishing the theorem.
\begin{lemma}\label{lem:targetstability}
    The exponential stability of the target system \eqref{obj1}--\eqref{obj6} is achieved in the sense that there exist positive constants $\Upsilon_{\Omega}$ and $\sigma_{\Omega}$ such that
    \begin{align}
        \Omega(t)\leq \Upsilon_{\Omega}\Omega(0)e^{-\sigma_{\Omega} t}\label{ome3}
    \end{align} where 
    \begin{equation}
		\Omega(t)=\sum_{i=1}^{n} z_i(t)^2 +\sum_{i=1}^{m} r_i(t)^2+\sum_{i=0}^{m}\Vert w_x^{(i)}(\cdot,t) \Vert^2.\label{ome1}
	\end{equation}
\end{lemma}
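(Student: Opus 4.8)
The plan is to exploit the strictly cascaded structure of \eqref{obj1}--\eqref{obj6}: the $R$-block \eqref{obj5}--\eqref{obj6} is autonomous and exponentially stable; it drives the transport equation \eqref{obj3}--\eqref{obj4} through the boundary value $w(1,t)=r_1(t)$; and the transport state in turn drives the $Z$-block \eqref{obj1}--\eqref{obj2} through the trace $w(0,t)$. Accordingly, I would work with a single Lyapunov functional that is norm-equivalent to $\Omega(t)$,
\begin{align*}
V(t)&=\sum_{i=1}^{n}a_i z_i(t)^2+\sum_{j=1}^{m}d_j r_j(t)^2\\
&\quad+\sum_{i=0}^{m}b_i\!\int_0^1(1+x)\,w_x^{(i)}(x,t)^2\diff x ,
\end{align*}
with positive weights $a_i,b_i,d_j$ to be fixed block by block, and establish $\dot V(t)\le-\mu V(t)$ for some $\mu>0$. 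Since $1\le 1+x\le 2$ on $[0,1]$, one has $\underline c\,\Omega(t)\le V(t)\le\overline c\,\Omega(t)$, so this gives \eqref{ome3} with $\sigma_{\Omega}=\mu$ and $\Upsilon_{\Omega}=\overline c/\underline c$ after integration. Only positivity of $k_1,\dots,k_n,c_1,\dots,c_m$ (which follows a fortiori from \eqref{ki}, \eqref{ci}) is used here.

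First I would handle the $R$-block: \eqref{obj5}--\eqref{obj6} is the linear cascade $\dot R=MR$ with $M$ upper-bidiagonal and diagonal entries $-c_j<0$, so a standard recursive (downward-in-index) choice of $d_j>0$ yields $\dot V_R\le-2\sigma_R V_R$ for $V_R=\sum_j d_j r_j^2$, hence $|R(t)|\le Ce^{-\sigma_R t}|R(0)|$. Next, for each $i=0,\dots,m$ the function $v_i:=w_x^{(i)}$ satisfies the same transport equation $D(v_i)_t=(v_i)_x$, and differentiating $Dw_t=w_x$ in $x$ and using \eqref{obj4} together with repeated differentiation of \eqref{obj5}--\eqref{obj6} gives the boundary value $v_i(1,t)=w_x^{(i)}(1,t)=D^i r_1^{(i)}(t)=D^i e_1^{T}M^i R(t)$, so $|v_i(1,t)|\le\kappa_i|R(t)|$ for constants $\kappa_i$. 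Integration by parts in $x$ then yields, for each $i$,
\begin{align*}
\frac{d}{dt}\!\int_0^1(1+x)v_i^2\diff x&\le\frac{2\kappa_i^2}{D}|R(t)|^2-\frac1D v_i(0,t)^2\\
&\quad-\frac1{2D}\!\int_0^1(1+x)v_i^2\diff x .
\end{align*}
Finally, for the $Z$-block, from $\dot z_i=-k_iz_i+z_{i+1}$ ($i<n$) and $\dot z_n=-k_nz_n+w(0,t)$, Young's inequality with a recursive choice of $a_i>0$ gives $\dot V_Z\le-2\sigma_Z V_Z+c_0\,w(0,t)^2$ for $V_Z=\sum_i a_i z_i^2$, where $w(0,t)=v_0(0,t)$.

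It then remains to close the cascade by fixing the weights in the right order. Choosing $a_i$ first (hence $c_0,\sigma_Z$), then $b_0\ge c_0 D$, the term $+c_0 w(0,t)^2$ produced by $\dot V_Z$ is absorbed into $-\tfrac{b_0}{D}v_0(0,t)^2$ coming from $b_0$ times the $i=0$ estimate above; with $b_1,\dots,b_m>0$ then fixed arbitrarily, the only remaining cross terms in $\dot V=\dot V_Z+\sum_i b_i\dot V_{w_i}+\dot V_R$ are the nonpositive traces $v_i(0,t)^2$ and a fixed multiple $C_R|R(t)|^2$ (which already contains $r_1(t)^2=v_0(1,t)^2\le|R(t)|^2$). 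Enlarging all $d_j$ by a common factor $\gamma$ — which leaves $\dot V_R\le-2\sigma_R V_R$ intact while driving $C_R|R(t)|^2\le\tfrac{C_R}{\gamma\underline d}V_R$ below $\sigma_R V_R$ — then gives $\dot V\le-\mu V$, and the norm equivalence completes the argument.

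The Lyapunov estimates for each block are routine; the step requiring care is the regularity bookkeeping for the higher spatial derivatives $w_x^{(i)}$ — namely that each $v_i$ is a sufficiently regular solution of the transport equation, that its trace at $x=1$ is indeed the linear functional $D^i e_1^{T}M^i R(t)$ (which uses the smoothness of the initial data $w(\cdot,0)$ and the compatibility between \eqref{obj3}, \eqref{obj4} and the $R$-dynamics), and that the weights $a_i,b_i,d_j$ are selected in an order under which the three blocks decouple in $\dot V$.
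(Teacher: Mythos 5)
Your proof is correct and takes essentially the same route as the paper's (Appendix B): a single Lyapunov functional combining quadratic forms in $Z$ and $R$ with spatially weighted $L^2$ norms of $w_x^{(i)}$, integration by parts in $x$, and the trade of the boundary traces $w_x^{(i)}(1,t)$ for $r_1^{(i)}(t)$, which the $R$-dynamics bound by $|R(t)|$, followed by a choice of weights giving $\dot V\le-\mu V$ and norm equivalence with $\Omega$. The only notable difference is cosmetic: you use the weight $(1+x)$ and recursive block-by-block weights (so only positivity of $k_i,c_j$ is needed), while the paper uses $e^x$ with uniform weights $\tfrac12$, $\tfrac{\rho}{2}$, $a_i$ and leans on the lower bounds in \eqref{ki}, \eqref{ci} to absorb the cross terms.
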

\begin{proof}
The proof is shown in Appendix \ref{dixlem1}.
	\end{proof}
 \begin{lemma}\label{lemma:deltarget}
  For the predictor $\delta$ \eqref{inverpred} of the transformed state $Z(t)$ in the target system, the signals $|\delta(x,t)|^2$, $\forall x\in[0,1]$ are exponentially convergence to zero, and also $|\delta_t^{(i)}(1,t)|^2,i=0,\cdots,m$ are exponentially convergent to zero.
 \end{lemma}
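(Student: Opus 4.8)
The plan is to read the conclusion off directly from the explicit representation \eqref{inverpred}, which in particular states $\delta(x,t)=Z(t+Dx)$, combined with the exponential decay of $\Omega(t)$ obtained in Lemma~\ref{lem:targetstability}. Since $|Z(t)|^2\le\Omega(t)$ by the definition \eqref{ome1}, for every fixed $x\in[0,1]$ I would write
\begin{align}
|\delta(x,t)|^2 &= |Z(t+Dx)|^2 \le \Omega(t+Dx) \notag\\
&\le \Upsilon_{\Omega}\Omega(0)e^{-\sigma_{\Omega}(t+Dx)} \le \Upsilon_{\Omega}\Omega(0)e^{-\sigma_{\Omega}t},\notag
\end{align}
where the last inequality uses $Dx\ge0$. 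This settles the first claim, with a decay rate that is in fact uniform in $x$.

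For the boundary derivatives I would first establish an ODE for the trace $\delta(1,t)$. This can be done either by differentiating \eqref{inverpred} at $x=1$ in $t$ (using $Dw_t=w_x$, one integration by parts in the spatial variable, and $w(1,t)=r_1(t)$ from \eqref{obj4}), or, more directly, by noting that $\delta(1,t)=Z(t+D)$ while $w(0,t+D)=r_1(t)$, so that the target dynamics \eqref{obj1}, \eqref{obj2} evaluated at $t+D$ give $\dot{\delta}(1,t)=A\delta(1,t)+Br_1(t)$. Differentiating this identity $i-1$ more times yields $\delta_t^{(i)}(1,t)=A^i\delta(1,t)+\sum_{j=0}^{i-1}A^{i-1-j}Br_1^{(j)}(t)$ for $i=0,\dots,m$.

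It then remains to control $r_1^{(j)}(t)$ for $0\le j\le m-1$. By the lower-triangular cascade \eqref{obj5}, each such $r_1^{(j)}(t)$ is a fixed linear combination of $r_1(t),\dots,r_{j+1}(t)$, hence $|r_1^{(j)}(t)|^2\le C\sum_{k=1}^{m}r_k(t)^2\le C\Omega(t)$; crucially, the derivative order never exceeds $m-1$, which is exactly the information carried by the $r$-terms in $\Omega$. Combining this with $|\delta(1,t)|^2=|Z(t+D)|^2\le\Omega(t+D)$ from the first step, I obtain $|\delta_t^{(i)}(1,t)|^2\le C\big(\Omega(t+D)+\Omega(t)\big)\le C\Upsilon_{\Omega}\Omega(0)e^{-\sigma_{\Omega}t}$ for $i=0,\dots,m$, which is the second claim.

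No Lyapunov analysis beyond Lemma~\ref{lem:targetstability} is needed: the argument is essentially algebraic, using the transport structure $Dw_t=w_x$ and the cascade form of the $R$-subsystem. The only point demanding care is the bookkeeping, namely justifying that \eqref{inverpred} may be differentiated in $t$ up to order $m$ — which follows from the smoothness Assumptions~\ref{assum1}, \ref{as:sa} and the regularity of the initial data — and verifying that forming $\delta_t^{(i)}(1,t)$ for $i\le m$ never invokes a derivative of $r_1$ of order higher than $m-1$. I do not anticipate any genuine obstacle here; this lemma is, in effect, a corollary of Lemma~\ref{lem:targetstability}.
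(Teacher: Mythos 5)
Your proposal is correct, but it reaches the conclusion by a genuinely different route than the paper. For the interior estimate the paper applies the Cauchy--Schwarz inequality to the explicit representation \eqref{inverpred}, bounding $|\delta(x,t)|^2$ by $|Z(t)|^2$ and $\Vert w(\cdot,t)\Vert^2$, whereas you invoke the prediction semantics $\delta(x,t)=Z(t+Dx)$ and the time-shifted decay $\Omega(t+Dx)\le\Upsilon_{\Omega}\Omega(0)e^{-\sigma_{\Omega}(t+Dx)}$; both are legitimate and use the same information from Lemma \ref{lem:targetstability}. The real divergence is in the boundary-derivative estimate: the paper differentiates the integral representation at $x=1$, integrates by parts using $Dw_t=w_x$, and then must control the traces $w_x^{(i)}(0,t)$, $w_x^{(i)}(1,t)$ through the norms $\Vert w_x^{(i)}(\cdot,t)\Vert$ and the $R$-states contained in $\Omega$ (this is what equation \eqref{deltai} and the subsequent trace bound accomplish). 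You instead use the characteristic identity $w(0,t+D)=r_1(t)$ (valid for $t\ge0$ since the characteristic from $(0,t+D)$ traces back to the boundary $x=1$ at time $t\ge0$) to obtain the finite-dimensional ODE $\frac{d}{dt}\delta(1,t)=A\delta(1,t)+Br_1(t)$, and then express $\delta_t^{(i)}(1,t)$ through $\delta(1,t)$ and $r_1^{(j)}(t)$, $j\le m-1$, each of which is a fixed linear combination of $r_1,\dots,r_{j+1}$ by the cascade \eqref{obj5}--\eqref{obj6}. This is cleaner: it avoids the integral representation, the integration by parts, and the trace estimates entirely, and it never touches the $\Vert w_x^{(i)}\Vert$ components of $\Omega$; your bookkeeping observation that only $r_1^{(j)}$ with $j\le m-1$ is needed (so $\dot r_m$ is never invoked) is exactly the right point to check, and the requisite differentiability follows from the smooth ODE cascade as you say. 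What the paper's heavier route buys is a formula written purely in terms of quantities at time $t$ (no time shift), which it reuses almost verbatim in the adaptive analysis; your route buys brevity and transparency while delivering the same decay rates, so it serves the downstream uses in Theorem \ref{theo1} equally well.
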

 \begin{proof}
 The proof is shown in Appendix \ref{dixlem2}.
\end{proof}
We show that the CBFs are ensured non-negative in the following two lemmas. 
	\begin{lemma}
		The high-relative-degree CBFs $r_j(t), j=1,\cdots,m$ are nonnegative all the time under the selection of design parameters \eqref{ci}, \eqref{c}, i.e., $r_j(t)\ge 0, j=1,\cdots,m$, for  time $t\ge 0$.\label{lemmasafe1}
	\end{lemma}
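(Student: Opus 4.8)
\textbf{Proof plan for Lemma \ref{lemmasafe1}.}
The plan is to exploit the cascade structure of the $r$-subsystem \eqref{obj5}, \eqref{obj6}, which is a lower-triangular linear chain driven from the bottom, and to proceed by \emph{downward} induction on $j$ (from $j=m$ down to $j=1$), showing at each level that $r_j(t)\ge 0$ for all $t\ge 0$. The base case is $j=m$: from \eqref{obj6} we have $\dot r_m(t)=-c_m r_m(t)$, hence $r_m(t)=r_m(0)e^{-c_m t}$, so it suffices to verify $r_m(0)\ge 0$. For the inductive step, suppose $r_{j+1}(t)\ge 0$ for all $t\ge 0$; then \eqref{obj5} gives $\dot r_j=-c_j r_j+r_{j+1}$ with a nonnegative forcing term, and the variation-of-constants formula $r_j(t)=e^{-c_j t}r_j(0)+\int_0^t e^{-c_j(t-\sigma)}r_{j+1}(\sigma)\diff\sigma$ shows $r_j(t)\ge 0$ provided the initial value $r_j(0)\ge 0$. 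So the whole lemma reduces to the claim that $r_j(0)\ge 0$ for every $j=1,\dots,m$.

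The key step is therefore to show that the design-parameter selection \eqref{ci}, \eqref{c} forces $r_j(0)\ge 0$. Here I would unwind the definition \eqref{ri}: $r_j(0)=bx_j(0)-\tau_{j-1}(\underline{x}_{j-1}(0),\underline{\Delta}^{(j-2)}(0))-\Delta^{(j-1)}(0)$. For $j=1$ this is $r_1(0)=bx_1(0)-\Delta(0)$, which is positive precisely by Assumption \ref{assum5}. For $2\le j\le m$, I would compare the expression for $r_j(0)$ with the definition of $\check c_{j-1}$ in \eqref{c}: rearranging \eqref{c}, the quantity $\check c_{j-1}$ is constructed exactly so that $bx_j(0)-\tau_{j-1}(0)-\Delta^{(j-1)}(0)$ equals $\frac{1}{c_{j-1}}\big(bx_j(0)-\tau_{j-2}(0)-\Delta^{(j-2)}(0)\big)\cdot c_{j-1}$ minus the correction terms — more precisely, recalling from \eqref{tau} that $\tau_{j-1}=-c_{j-1}r_{j-1}(0)-b\varphi_{j-1}+\sum_{k=1}^{j-2}[\cdots]$, one gets
\begin{align}
r_j(0)&=bx_j(0)+c_{j-1}r_{j-1}(0)+b\varphi_{j-1}(\underline{x}_{j-1}(0))\notag\\
&\quad-\sum_{k=1}^{j-2}\Big[\tfrac{\partial\tau_{j-2}}{\partial x_k}\big(x_{k+1}(0)+\varphi_k\big)+\tfrac{\partial\tau_{j-2}}{\partial\Delta^{(k-1)}}\Delta^{(k)}(0)\Big]\notag\\
&\quad-\Delta^{(j-1)}(0).\notag
\end{align}
Using the inductive hypothesis $r_{j-1}(0)>0$ (established at the previous level of an \emph{upward} sub-induction on the initial values) together with the definition of $\check c_{j-1}$, the lower bound $c_{j-1}>\check c_{j-1}$ makes the term $c_{j-1}r_{j-1}(0)$ dominate all the remaining terms, yielding $r_j(0)>0$. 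The sign of the denominator in \eqref{c} — which is $r_{j-1}(0)$ — being positive is exactly what the previous step of this sub-induction delivers, so the definition \eqref{c} is well-posed.

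So the structure is a nested induction: an \emph{ascending} induction on $j$ at $t=0$ to prove $r_j(0)>0$ for all $j$ (base case $j=1$ from Assumption \ref{assum5}, step using \eqref{ci}, \eqref{c} and the dominance argument above), followed by the \emph{descending} induction on $j$ over all $t\ge0$ described in the first paragraph, using the explicit solution of the cascade \eqref{obj5}, \eqref{obj6}. Finally, since the transformations \eqref{ri}--\eqref{tau} are invertible (Proposition \ref{leminver1}), the nonnegativity of $r_j(t)$ transfers to a corresponding statement about the $X$-states. The main obstacle I anticipate is purely bookkeeping: carefully matching the sum-of-partial-derivatives correction terms appearing in \eqref{c} with those generated by expanding $\tau_{j-1}$ via \eqref{tau}, so that the ratio in \eqref{c} really does capture the exact threshold on $c_{j-1}$ that guarantees $r_j(0)>0$; no analytic difficulty beyond this algebraic identification is expected, since the $w$-PDE and the $z$-chain play no role in this particular lemma.
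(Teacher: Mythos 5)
Your proposal is correct and follows essentially the same route as the paper's proof: an ascending induction at $t=0$ showing $r_1(0)=bx_1(0)-\Delta(0)>0$ (Assumption \ref{assum5}) and then $r_{j+1}(0)>0$ from $r_j(0)>0$ via the expansion of $\tau_j$ and the choice $c_j>\check c_j$ in \eqref{ci}, \eqref{c} (whose denominator is exactly $r_j(0)$), followed by propagating nonnegativity forward in time through the explicit/variation-of-constants solution of the cascade \eqref{obj5}, \eqref{obj6}. No gap; the closing remark about transferring positivity to the $X$-states is unnecessary for the lemma but harmless.
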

	\begin{proof}
		Firstly, we claim that all the initial values of $r_j(t)$ are positive. Recalling Assumption \ref{assum5} and \eqref{uxt}, \eqref{ker}, \eqref{delta}, \eqref{obj4}, we have the initial condition $r_1(0)=w(1,0)>0$.
		Setting $t=0$ in \eqref{ri},  we have
		\begin{align}	
				&r_{j+1}(0)=bx_{j+1}(0)-\tau_{j}(\underline{x}_{j}(0),\underline{\Delta}^{(j-1)}(0))-\Delta^{(j)}(0)\notag\\
				&=bx_{j+1}(0)-\Delta^{(j)}(0)+c_jr_j(0)+b\varphi_j(\underline{x}_{j}(0)) \notag\\
				&-\sum_{k=1}^{j-1}\big(\frac{\partial \tau_{j-1}}{\partial x_k}(x_{k+1}(0)+\varphi_k)+\frac{\partial \tau_{k-1}}{\partial \Delta^{(k-1)}}\Delta^{(k)}\big),\label{r0}
		\end{align}
		for $j=1,\cdots,m-1$. For $r_j(0)>0$, using the design parameter selection of $c_j$ in \eqref{ci}, \eqref{c}, we obtain from  \eqref{r0} that $r_{j+1}(0)>0$. Based on this induction result, starting from $r_1(0)>0$, it can recursively infer that all the initial conditions $r_{j}(0)>0, j=1,\cdots,m$. The solution of  \eqref{obj5}, \eqref{obj6} is
		 $r_m(t)=r_m(0)e^{-c_mt}$, $r_j(t)=r_j(0)e^{-c_jt}+\int_{0}^{t}e^{-c_j(t-y)}r_{j+1}(y)\diff y, j=1,\cdots,m-1$,  together with the above positive initial condition, the lemma is then obtained.
	\end{proof}
	\begin{lemma}\label{lemmasafe2} 
		The high-relative-degree CBFs $z_i(t), i=1,\cdots,n$ are nonnegative under the selection of design parameters  \eqref{ki}, \eqref{k}, i.e., $z_i(t)\ge 0, i=1,\cdots,n$, for time $t\ge D$.
	\end{lemma}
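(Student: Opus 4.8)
The plan is to mirror the two‑stage structure used in the proof of Lemma~\ref{lemmasafe1}: first establish that the CBF states are \emph{strictly} positive at the instant $t=D$ at which the control begins to act on the distal ODE, and then propagate non‑negativity forward for all $t\ge D$ through the cascade \eqref{obj1}, \eqref{obj2}.

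\textbf{Stage 1 (positivity at $t=D$).} Since $P(0)=Y(D)$ by \eqref{eq:preP}, evaluating the transformation \eqref{zi} at $t=D$ (with $\underline{y}_{i-1}(D)=\underline{P}_{i-1}(0)$) gives $z_i(D)=P_i(0)-h_{i-1}(\underline{P}_{i-1}(0),\underline{s}^{(i-2)}(D))-s^{(i-1)}(D)$, which is exactly the denominator appearing in \eqref{k}. For the base case, $z_1(D)=y_1(D)-s(D)=e_1P(0)-s(D)>0$ by \eqref{eq:assum4} of Assumption~\ref{assum4} at $\theta=0$; in particular this denominator, for $i=1$, is nonzero independently of the $\check k_j$'s. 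For the inductive step I would assume $z_i(D)>0$ for some $1\le i\le n-1$ — which makes $\check k_i$ in \eqref{k} well defined, its denominator being $z_i(D)$ — substitute the recursion \eqref{hi} for $h_i$ into $z_{i+1}(D)=y_{i+1}(D)-h_i(\underline{y}_i(D),\underline{s}^{(i-1)}(D))-s^{(i)}(D)$ evaluated at $t=D$ with $\underline{y}_i(D)=\underline{P}_i(0)$, and use the identity $\check k_i z_i(D)=-P_{i+1}(0)-\psi_i(\underline{P}_i(0))+s^{(i)}(D)+\sum_{k=1}^{i-1}(\cdots)$ read off from \eqref{k}. All terms not proportional to $z_i(D)$ cancel, leaving $z_{i+1}(D)=(k_i-\check k_i)z_i(D)>0$ by \eqref{ki} and the inductive hypothesis. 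Hence $z_i(D)>0$ for every $i=1,\dots,n$, and along the way every denominator in \eqref{k} has been verified to be nonzero, so the parameter selection \eqref{k}, \eqref{ki} is itself well posed.

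\textbf{Stage 2 (propagation for $t\ge D$).} From the transport dynamics \eqref{obj3}, \eqref{obj4} one has $w(0,t)=r_1(t-D)$, so for $t\ge D$ Lemma~\ref{lemmasafe1} gives $w(0,t)=r_1(t-D)\ge0$. Then \eqref{obj2} and variation of constants give $z_n(t)=z_n(D)e^{-k_n(t-D)}+\int_D^t e^{-k_n(t-\tau)}w(0,\tau)\diff\tau\ge0$ for $t\ge D$, using $z_n(D)>0$. Feeding this into \eqref{obj1} with $i=n-1$ and proceeding by downward induction, $z_i(t)=z_i(D)e^{-k_i(t-D)}+\int_D^t e^{-k_i(t-\tau)}z_{i+1}(\tau)\diff\tau\ge0$ for all $t\ge D$ and $i=n-1,\dots,1$, which is the assertion of the lemma.

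The main obstacle is Stage~1: one must unwind the two coupled recursions — the one defining $h_i$ in \eqref{hi} and the one defining $\check k_i$ in \eqref{k} — and confirm that the induction is logically well founded, i.e. that $z_1(D)>0$ is obtained purely from Assumption~\ref{assum4} before any $\check k_j$ enters, that this then legitimizes $\check k_1$, hence $z_2(D)>0$, and so on. A second point requiring care is that the ``future'' trajectory data $\underline{s}^{(i)}(D)$ and the predictor initial data $\underline{P}_i(0)$ appearing in \eqref{k} are precisely the quantities obtained by evaluating \eqref{zi}, \eqref{hi} at $t=D$ together with $P(0)=Y(D)$; once this bookkeeping is matched up, the cancellation producing $z_{i+1}(D)=(k_i-\check k_i)z_i(D)$ is routine, and Stage~2 is a sign‑propagation argument identical in spirit to the proof of Lemma~\ref{lemmasafe1}.
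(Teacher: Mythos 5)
Your proposal is correct and follows essentially the same route as the paper: induction at $t=D$ using $P(0)=Y(D)$ and the parameter selection \eqref{ki}, \eqref{k} to get $z_i(D)>0$ (your explicit cancellation $z_{i+1}(D)=(k_i-\check k_i)z_i(D)$ is just a sharper statement of the paper's induction step via \eqref{zd}), followed by forward propagation through \eqref{obj1}, \eqref{obj2} using $w(0,t)=r_1(t-D)\ge0$ from Lemma~\ref{lemmasafe1}. No gaps.
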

	\begin{proof}
	Setting $t=D$ and replacing $Y(D)$ with $P(0)$ in \eqref{zi}, then one gets
		\begin{align}
				&z_{i+1}(D)=P_{i+1}(0)-h_{i}(\underline{P}_{i}(0),\underline{s}^{(i-1)}(D))-s^{(i)}(D)\notag\\
				&=P_{i+1}(0)-s^{(i)}(D)+k_iz_i(D)+\psi_i(\underline{P}_{i}(0))\notag\\
				&-\sum_{k=1}^{i-1}\big(\frac{\partial h_{i-1}}{\partial P_k}\bigl(P_{k+1}(0)+\psi_k\bigl)+\frac{\partial h_{i-1}}{\partial s^{(k-1)}}s^{(k)}(D)\big).	\label{zd}
		\end{align}
		Under the selection of parameters $k_i$ in \eqref{ki}, \eqref{k}, we also obtain the induction step: if $z_i(D)>0$, then $z_{i+1}(D)>0$ according to \eqref{zd}. Because of the base case that $z_1(D)=P_1(0)-s(D)>0$ ensured by \eqref{eq:assum4} in Assumption \ref{assum4}, applying the induction step obtained above, we thus have that all $z_i(D)>0$, $i=1,\cdots,n$. Considering the structure of \eqref{obj1}, \eqref{obj2}, together with the initial conditions $z_i(D)>0$, and  $w(0,t)=r_1(t-D)\ge 0, t\ge D$ according to Lemma \ref{lemmasafe1} as well as \eqref{obj3}, \eqref{obj4}, we obtain that $z_i(t)> 0, i=1,\cdots,n$ on $t\in[D,\infty]$.\end{proof} 
  Now, we are ready to show the proof of Theorem \ref{theo1}:
  \begin{proof} \noindent\textbf{1)} 
  The stability results of the target system in Lemmas \ref{lem:targetstability} and \ref{lemma:deltarget} indicate that $|Z(t)|^2$, $|R(t)|^2$, $\sum_{i=0}^{m}\|w_x^{(i)}(\cdot,t)\|^2$, $|\delta(x,t)|^2$, $\forall x\in[0,1]$, and $|\delta_t^{(i)}(1,t)|^2,i=0,\cdots,m$ are exponentially convergent to zero.
  
  Considering the tracking error $|z_1(t)|$ exponentially converges to zero, we have that $|y_1(t)-s(t)|$ exponentially converges to zero.

  Recalling Proposition \ref{leminver1}, applying Cauchy-Schwarz inequality, it follows \eqref{inveryi}--\eqref{inverh} and exponential convergence to zero of $z_i(t)^2$ that $y_i(t)^2$ are bounded and the ultimate bound depends on $\sum_{j=0}^{i-1}s^{(j)}(t)^2$ for $i=1,\cdots,n$. According to \eqref{inverker} in Proposition \ref{leminver1}, it is then obtained that  $\|u(\cdot,t)\|^2$ is bounded and the ultimate bound depends on the functions $\int_0^{1}\underline{s}^{(n)}(t+Dx)^2\diff x$. Recalling \eqref{uxt}, we thus have \begin{align}
	\int_{t-D}^t x_1(\tau)^2\diff\tau=\frac{\|u\|^2}{b^2}\label{equx1}
\end{align}
 is also ultimately bounded by a function depending on $\int_0^{1}\underline{s}^{(n)}(t+Dx)^2\diff x$. Plugging the predicted states $p(1,t)$, $\delta(1,t)$ into the inverse transformation \eqref{inveryi}, we have
	\begin{align}
        p_i(1,t)=&\delta_i(1,t)+\bar h_{i-1}(\underline{\delta}_{i-1}(1,t),\underline{s}^{(i-2)}(t+D))\notag\\&+s^{(i-1)}(t+D),~~ i=1,\cdots,n  \label{pixt}
	\end{align}
 recalling the continuous differentiability of $\bar{h}_i$ in \eqref{inverh} and the exponential convergence to zero of  $\delta_i(x,t)$ proven above, we have that  ${e_i}P(t)=p_i(1,t)$ is bounded for $t>0$ and the ultimate bound depends on $\underline{s}^{(i-1)}(t+D)$ for $i=1,\cdots,n$. Recalling \eqref{eq:assum3} in Assumption \ref{assum4} and \eqref{sol}, we know $|p(x,t)|^2$ are bounded and the ultimate bound depends on $\underline{s}^{(i-1)}(t+Dx)^2$. 
Recalling \eqref{inverdelta}, one obtains that $|\underline{\bar\Delta}^{(i)}(t)|^2$ is bounded with the ultimate bound depending on $ \sum_{j=0}^{n}s^{(j)}(t+D)^2$, because $|\delta_t^{(i)}(1,t)|^2,i=0,\cdots,m$ are exponentially convergent to zero and the $\bar{h}_n$ in \eqref{inverh} is continuously differentiable. Recalling Proposition \ref{leminver1}, applying Cauchy-Schwarz inequality, it follows \eqref{inver ri}--\eqref{invertau} and exponential convergence to zero of $r_i(t)^2$, $i=1,\cdots,m$ that $x_i(t)^2$ are bounded and the ultimate bound depends on $\sum_{j=0}^{i-1}s^{(n+j)}(t+D)^2$ for $i=1,\cdots,m$. Therefore, we conclude that $\Psi(t)$ defined by \eqref{eq:Psi} is bounded, and the ultimate bound depends on the given target trajectory $s(t)$.

Next, we show that $\Psi(t)$ \eqref{eq:Psi} is exponentially convergent to zero if the target trajectory $s(t)\equiv 0$. Considering the exponential convergence to zero of $|Z(t)|$ and $\bar{h}_n(0,0)=0$, applying Cauchy-Schwarz inequality for \eqref{inveryi} in Proposition \ref{leminver1},  we know that $|Y(t)|$ is exponentially convergent to zero. Recalling the inverse transformation of $u(x,t)$ \eqref{inverker} in Proposition \ref{leminver1}, it is obtained from the exponential convergence to zero of $|{\delta}(x,t)|$, $\Vert w(\cdot,t) \Vert$, and $\bar{h}_n(0,0)=0$ that $\Vert u(\cdot,t)\Vert$ is exponentially convergent to zero. It follows \eqref{equx1} that $(\int_{t-D}^t x_1(\tau)^2\diff\tau)^{\frac{1}{2}}$ is also exponentially convergent to zero. It is obtained from \eqref{pixt} that $p(1,t)=P(t)$ are exponentially convergent to zero considering  $\bar{h}_i(0,0)=0$ in \eqref{inverh}, and the exponential convergence to zero of  $|\delta(x,t)|$ obtained before. Moreover, according to \eqref{sol} and \eqref{eq:assum3} in Assumption \ref{assum4}, the exponential convergence to zero of $|{p}(x,t)|$, $\forall x\in[0,1]$ is obtained. Then, recalling \eqref{inverdelta}, we have that $|\bar\Delta^{(i)}(t)|, i=0,\cdots,m$ are exponentially convergent to zero because of the exponential convergence to zero of  $|\delta_t^{(i)}(1,t)|$ obtained above and continuous differentiability of $\bar{h}_n$.
Finally, it is obtained from \eqref{inver ri}--\eqref{invertau} in Proposition \ref{leminver1} that $|X(t)|$ is exponential convergent to zero, recalling the exponential convergence to zero of $|R(t)|$, $|\underline{\bar\Delta}^{(i)}(t)|, i=0,\cdots,m$, and $\bar{\tau}_i(0,0)=0$.

The property 1 is obtained.
  
\noindent\textbf{2)} Over the time period $t\in[0,D]$, i.e, when no control action reaches, the $Y$-system \eqref{equ1}, \eqref{equ2} is only actuated by the initial input history signal $x_1(t)$, $t\in[-D,0]$.
We know from \eqref{eq:assum4} in Assumption \ref{assum4} and \eqref{eq:preP} that 
	\begin{align}
		y_1(t)-s(t)=P_1(t-D)-s(t)>0,t\in[0,D],\label{zd0}
	\end{align}
	i.e., the safety is ensured on $t\in[0,D]$. 
  From Lemma \ref{lemmasafe2}, it holds that 
$	
		z_1(t)=y_1(t)-s(t)\ge 0, t\in[D,\infty).
$
	Consequently, the safety is ensured all the time.  The property 2 is thus proved.\par
	The proof of Theorem \ref{theo1} is complete.
\end{proof}	
Based on safe infinite-dimensional backstepping transformations, a nominal safe delay-compensated controller is proposed in this section. Next, given the uncertainty of the delay, we will propose a safe delay-adaptive controller.

\section{Safe Delay-Adaptive Controller}\label{secadaptive}
\subsection{Delay-adaptive control design}\label{identifier}
Following the delay-adaptive design in \cite{delay9}, we obtain the delay identifier:
\begin{align}
		&\hat{D}(t_{i+1})=\arg\min\Big\{|\ell-\hat{D}(t_{i})|^2: \ell\in D_0,\notag\\
		&G_n(t_{i+1},\mu_{i+1})\ell=F_n(t_{i+1},\mu_{i+1}),n=1,2,\cdots\Big\},\label{iden}
\end{align}
where the set $D_0:=\{\ell\in \mathbb{R}:\underline{D}\leq \ell\leq\overline{D}\}$ uses the known bounds of the unknown delay given in Assumption \ref{assum2}, and where  ${\{t_i\ge0\}}_{i=0}^\infty,i=0,1,2,\cdots$ is a sequence of time instants for identification, defined as
\begin{equation}
    t_{i+1}=t_i+T.
\end{equation}
The free positive design parameter $T$ denotes the dwell time between two adjacent triggering times. { A larger $T$ improves the robustness of the delay identifier to sensor measurement errors but prolongs parameter identification time. Conversely, a smaller $T$ enables faster identification of unknown delay but may reduce robustness to sensor measurement error.} The time instant $\mu_{i+1}$ is defined as
\begin{equation}
	\mu_{i+1}:=\min\{t_g:g\in\{0,\cdots,i\},t_g\ge t_{i+1}-\tilde{N}T\},\label{mu}
\end{equation}
where the positive integer $\tilde{N}\ge 1$ is a free design parameter, 
which determines the size of the data set used in the delay identification at $t_{i+1}$, reflecting a trade-off between the identifier's robustness and computation cost.

The functions $G_n, F_n$ in \eqref{iden} are given as
$
        G_n(t_{i+1},\mu_{i+1})=\int_{\mu_{i+1}}^{t_{i+1}}g_n(t)^2\diff t$,
$	F_n(t_{i+1},\mu_{i+1})=\int_{\mu_{i+1}}^{t_{i+1}}g_n(t)f_n(t)\diff t,
$
where $f_n(t)=\pi  n\int_{0}^{t}\int_{0}^{1}\cos(x\pi n)u(x,\tau)\diff x\diff\tau$,
$g_n(t)=-\int_{0}^{1}\sin(x\pi n)u(x,t)\diff x$. 
The detailed design process and the proof of exact delay identification can be found in \cite{delay9}.  
We only consider the scenario where $x_1(t-D)=0$ on $t\in[0,D)$, i.e., there is no signal reaching $Y$-subsystem before $t=D$ according to \eqref{equ2}, in the adaptive control. For the case that $x_1(t)$ is not identically zero on $t\in[-D,0)$, a slight modification is needed in the formulation of the delay identifier, and some expended analysis is required in the proof.

Now, using  the proposed estimate $\hat{D}(t_{i})$ to replace the unknown delay $D$ in the nominal controller \eqref{u}, we construct a delay-adaptive controller
\begin{align}
	U_d(t)=\mathcal{U}(t,\hat{D}(t_i)), ~ t\in[t_i,t_{i+1}). \label{adacon}
\end{align}
The safety ensured by the nominal control can not be guaranteed here because of the delay identification error.
Following the safe-adaptive control design in \cite{safe5}, we introduce a QP safety filter \eqref{ua} to override the potentially unsafe adaptive controller \eqref{adacon} to enforce the safety in the adaptive control.
\subsection{Safe delay-adaptive control design}
First, considering the unknown $D$, we select the design parameters $k_i$, $c_j$ \eqref{ki}, \eqref{ci} as 
\begin{align}
	k_i&>\max_{\mathcal{D}\in D_0}\{2,\check{k}_i(\mathcal{D})\}, i=1,\cdots,n-1, k_n>1, \label{kia}\\
	c_j&>\max_{\mathcal{D}\in D_0}\{2,\check{c}_j(\mathcal{D})\}, j=1,\cdots,m-1, c_m>1, \label{cja}
\end{align}
where  $\check{k}_i(\mathcal{D}),\check{c}_j(\mathcal{D})$ are obtained by replacing the unknown delay $D$ in \eqref{k}, \eqref{c} by $\mathcal{D}\in[\underline{D},\overline{D}]$, where the bounds $\underline{D},\overline{D}$ are known according to Assumption \ref{assum2}. Because the condition \eqref{kia}, \eqref{cja} is a subset of the one \eqref{ki}, \eqref{ci}, the positive initialization about $r_j(0)$ and $z_i(D)$, as shown in the proofs of Lemmas \ref{lemmasafe1}, \ref{lemmasafe2}, still hold here. Then recalling the target system \eqref{obj1}--\eqref{obj6} and the analysis about the safety in the proof of Property 2 in Theorem \ref{theo1}, we know the safety objective $z_1(t)=y_1(t)-s(t)\ge0$ is achieved as long as $r_m(t)>0$ for all time $t>0$, of which a sufficient condition is
\begin{equation}
	\dot{r}_m(t)\ge-\overline{c}r_m(t),\label{safecon1}
\end{equation}
where the positive parameter $\overline{c}$ is free. A safe region for the control action is then obtained from \eqref{safecon1} as $\mathcal{S}(t)=\{u\in\mathbb{R}:bu\ge b\mathcal{U}^*(t,D)\}$ where
\begin{equation}
	\mathcal{U}^*(t,D)=\frac{1}{b}\big[(c_m-\overline{c})r_m(t)+\tau_m+\Delta^{(m)}(t)\big].\label{uc}
\end{equation}
Considering the unknown $D$, by replacing the unknown delay by $\mathcal{D}$, a conservative safe region of the adaptive control input is introduced as 
\begin{equation}
	C(t)=\left\{u\in \mathbb{R}:bu\ge \max_{\mathcal{D}\in D_0}b\mathcal{U}^*(t,\mathcal{D})\right\}.\label{saferegion}
\end{equation}
By using a QP safety filter to constrain the input signal within this safe region \eqref{saferegion} before exact identification is achieved, we build the following safe adaptive controller:
\begin{equation}
	U_a(t)=\begin{cases}
		\mathop{\arg\min}\limits_{u\in\mathbb{R}} \{ |u-U_d|\}^2\\
		s.t.\quad u\in C(t), &t\in[0,t_f)\\
		U_d,        &t\in[t_f,\infty)
	\end{cases}
\end{equation}
whose explicit solution is
\begin{equation}
U_a(t)=\begin{cases}
		\begin{cases}
		\max\{U_d,\max\limits_{\mathcal{D}\in D_0}\mathcal{U}^*(t,\mathcal{D})\},& \text{if}\quad b>0;\\
	     \min\{U_d,\min\limits_{\mathcal{D}\in D_0}\mathcal{U}^*(t,\mathcal{D})\},& \text{if}\quad b<0,
		\end{cases}\\
	\phantom{U_d,\max\limits_{\mathcal{D}\in C_i}U^*(t,\mathcal{D})\}=====}0\le t\le t_f\\
	 U_d(t). \phantom{aaaaaaaaaaa=====}  t> t_f
	\end{cases}\label{ua}
\end{equation}

The switching time $t_f$ is the triggering time when the delay $D$ is exactly identified, determined by
\begin{equation}
	t_f=\min\{t_i:\exists t\in [0,t_i),u(\cdot,t)\neq0\},
\end{equation}
according to the proof of exact identification of delay shown in Sec. V of \cite{delay9}.
The diagram of the proposed safe delay-adaptive control system is depicted in Fig. \ref{fig dig}. The practical implementation of this safe delay-adaptive control law can refer to Remarks \ref{remarkimple}, \ref{remarkdelay} in the simulation. 
\subsection{Result with safe delay-adaptive control}
\begin{figure}[t]
	\centering	
	\includegraphics[width=1\linewidth, height=0.2\textheight]{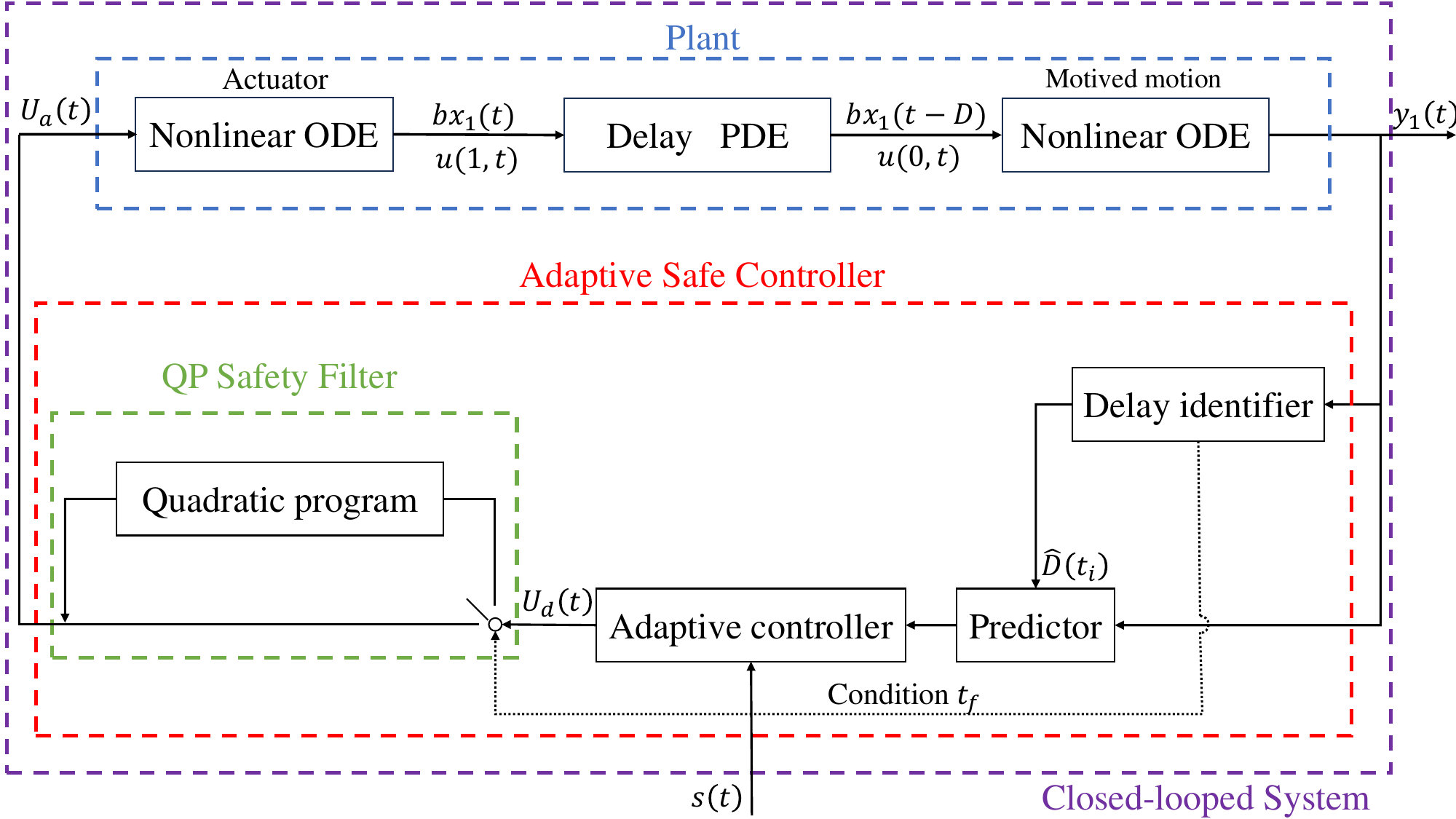}
	\caption{ The diagram of the safe delay-adaptive control.}		
	\label{fig dig}
\end{figure}
Comparing the safe nominal controller \eqref{u} with the safe adaptive controller \eqref{ua}, we define their difference as
	\begin{equation}
	\gamma(t)=bU_a(t)-bU(t).\label{gam}
	\end{equation}
 Then, implementing $U_a(t)$ as the input into the original system \eqref{equ1}--\eqref{equ4}, the target system becomes \eqref{obj1}--\eqref{obj5} with
	\begin{equation}
	\dot{r}_m(t)=-c_mr_m(t)+\gamma(t). \label{gam1}
\end{equation}
\begin{remark}\label{remark2}
	In the control input \eqref{u}, the delay $D$, which exists in the predictor state $P_i(t)$ of the distal $Y$-system as shown in \eqref{pred1}, is associated with $\Delta(t)$, i.e., the parts related to states of the $Y$-system, while independent of the signals from the $X$-system. Thus, the function $\gamma(t)$ given by \eqref{gam} does not contain the signals from the $R$-system (in the form of states of the target system), that is, $\gamma(t)$ can be regarded as an external signal to \eqref{gam1}.
\end{remark}
\begin{proposition}\label{pro1}
	For every $(u(\cdot~,0),X(0),Y(0))\in \mathbb{C}^{m-1}\allowbreak([0,1]) \times\mathbb{R}^m\times\mathbb{R}^n$, there exist a unique solution $(u,X,Y)\allowbreak\in \mathbb{C}^{m-1}([0,\infty)\times[0,1])\times\mathbb{C}^0([0,\infty);\mathbb{R}^m)\times\mathbb{C}^0([0,\infty);\mathbb{R}^n)$ to the system \eqref{equ1}--\eqref{equ4} with control input \eqref{ua}.
\end{proposition}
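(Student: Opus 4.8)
The plan is to argue by the method of steps, reducing the mixed PDE--ODE closed loop to a finite‑dimensional one. First I would eliminate the transport dynamics: integrating $Du_t=u_x$, $u(1,t)=bx_1(t)$ along characteristics gives $u(x,t)=bx_1(t-D+Dx)$ once $t-D+Dx\ge 0$, and $u(x,t)$ equals the propagated initial datum $u(\cdot,0)$ otherwise, so $u(0,t)$ is a known functional of the past of $x_1$ (identically zero on $[0,D)$ under the standing hypothesis of this section) and $p$ is given by \eqref{sol}. For the predictor it is equivalent to let $P$ solve the ODE $\dot P(t)=f(P(t),bx_1(t))$ with the computable initial vector $P(0)$ from \eqref{pred2}; this is consistent with the implicit relation \eqref{pred1} and yields $P(t)=Y(t+D)$ by the usual predictor argument. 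Consequently the closed loop decouples into a self‑contained system for $(X,P)\in\mathbb R^{m}\times\mathbb R^{n}$, since the input $U_a$ in \eqref{ua} depends only on $t$, $X(t)$, $P(t)$, the frozen estimate $\hat D(t_i)$ and the derivatives $\underline{s}^{(n+m)}(\cdot)$, and not on $Y$. Once $(X,P)$ is resolved on $[0,\infty)$, the state $Y$ is recovered by integrating $\dot Y=f(Y,u(0,t))$ and $u$ by the characteristic formula above.

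On each interval $[t_i,t_{i+1})$ the estimate $\hat D(t_i)$ is constant, so the right‑hand side of the $(X,P)$-system is continuous in $t$, and it is locally Lipschitz in $(X,P)$: $f$ and the $\varphi_j$ have the smoothness prescribed by Assumption \ref{assum1}, $s\in\mathbb C^{n+m}$ by Assumption \ref{as:sa}, and the safety filter \eqref{ua} is a pointwise maximum/minimum over the compact set $D_0$ of the maps $\mathcal{U}(\cdot,\mathcal D)$ and $\mathcal{U}^*(\cdot,\mathcal D)$, which are jointly continuous and locally Lipschitz in the state, hence so is $U_a$. The well‑posedness of the identifier \eqref{iden} --- existence of the minimiser over $D_0$ and finiteness of $t_f$ --- is that of \cite{delay9}. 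The Picard--Lindel\"{o}f theorem then yields a unique maximal solution on each step, and concatenation over the finitely many switchings in any bounded horizon gives a unique solution up to a possible finite escape time.

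It remains to exclude finite‑time escape. For $t\ge t_f$ the estimate equals $D$, so $U_a\equiv U$ and the loop is the nominal one of Theorem \ref{theo1}; the exponential stability of the target system (Lemma \ref{lem:targetstability}) together with the regularity of the inverse transformations (Proposition \ref{leminver1}) bounds $|X(t)|$, $|P(t)|$, and hence $|Y(t)|$ and $\|u(\cdot,t)\|$, by a continuous function of the trajectory and its derivatives, exactly as in the proof of Theorem \ref{theo1}. On the fixed bounded interval $[0,t_f)$ I would run a continuation argument in the target coordinates: the state obeys \eqref{obj1}--\eqref{obj5}, \eqref{gam1} with $\gamma=bU_a-bU$, which by Remark \ref{remark2} enters only $\dot r_m$ as an exogenous term; on any sub‑interval where $(Z,w,R)$ --- equivalently $(X,P)$ --- stays bounded, $\gamma$ is bounded because $U_a$ and $U$ are continuous in the state, and feeding this bound through the stable linear $Z$- and $R$-cascades, the transport equation \eqref{obj3}, and the locally Lipschitz inverse maps of Proposition \ref{leminver1} closes a Gronwall-type estimate that precludes blow‑up before $t_f$. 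Hence the solution extends to $[0,\infty)$, and uniqueness is inherited from the step‑by‑step local uniqueness. For the regularity claim, $U_a$ is piecewise continuous, so $\dot x_m=\varphi_m+U_a$ gives $x_m\in\mathbb C^{0}$; ascending the cascade $\dot x_{j}=x_{j+1}+\varphi_{j}$ one gains a derivative at each level, using precisely the orders of smoothness in Assumption \ref{assum1}, so $x_1\in\mathbb C^{m-1}$, whence $u(x,t)=bx_1(t-D+Dx)$ is $\mathbb C^{m-1}$ in $(x,t)$, while $X,Y\in\mathbb C^{0}$ is immediate.

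The step I expect to be the main obstacle is the exclusion of finite‑time blow‑up on $[0,t_f)$: there the QP‑overridden controller is active, so the clean Lyapunov/inverse‑transformation argument available for $t\ge t_f$ (and in Theorem \ref{theo1}) does not apply verbatim, and one must carefully unwind the mutual dependence between the override discrepancy $\gamma$ and the plant state --- routed through the inverse backstepping maps and the transport delay --- to secure an a priori bound on that fixed horizon.
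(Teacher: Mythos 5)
Your route is genuinely different from the paper's (which simply maps the initial data into the target coordinates, treats $\gamma$ in \eqref{gam1} as an exogenous signal via Remark~\ref{remark2}, cites [Proposition~1, \cite{safe5}] for well-posedness of the target system, and pulls back through the inverse transformations of Proposition~\ref{leminver1}), but as written it has a genuine gap at its central reduction. The decoupling of the closed loop into a self-contained finite-dimensional $(X,P)$ system is only valid when the predictor in the controller is computed with the \emph{true} delay: differentiating \eqref{pred1} gives $\dot P(t)=\dot Y(t)+f(P(t),bx_1(t))-f(P(t-D),bx_1(t-D))$, and the cancellation producing $\dot P=f(P,bx_1)$ uses $P(t-D)=Y(t)$. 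Before $t_f$, however, the controller \eqref{ua} is built from $U_d$ in \eqref{adacon} and from $\mathcal U^*(t,\mathcal D)$ over all candidates $\mathcal D\in D_0$ in \eqref{saferegion}, i.e.\ from predictors computed with $\hat D(t_i)$ and with every $\mathcal D\neq D$; for those the cancellation fails and each candidate predictor remains a Volterra-type functional of $Y(t)$ and the $x_1$-history. So the claim that $U_a$ ``does not depend on $Y$'' is false on $[0,t_f)$, and your Picard--Lindel\"of step is being applied to a system that is not the closed loop. The argument can be repaired by carrying $(X,Y)$ together with the input-history segment in a method-of-steps/functional-ODE setting, but that is precisely the bookkeeping the paper sidesteps by working in the $(Z,w,R)$ target variables, where Remark~\ref{remark2} isolates the mismatch as the single exogenous input $\gamma$ to \eqref{gam1}.

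The second soft spot is the exclusion of finite-time escape on $[0,t_f)$, which you yourself flag but do not resolve. ``On any sub-interval where the state stays bounded, $\gamma$ is bounded, and Gronwall closes the estimate'' is circular: with the nonlinearities only assumed smooth (Assumption~\ref{assum1}, no global Lipschitz or linear-growth hypothesis), boundedness of $\gamma$ on bounded sets gives no differential inequality of the form $\dot V\le cV+c'$ with solution-independent constants, hence no continuation argument. What is actually needed is a growth bound of $\gamma$ in terms of the target-system norm, such as the estimate $\gamma^2(t)\le\Upsilon\,\Omega(t)$ (with $\Omega$ as in \eqref{ome1}) that the paper extracts from Remark~\ref{remark2} and the transformations in the proof of Theorem~\ref{theo2}; with that bound, $\dot V\le(\varrho_0-\varrho)V$ on $[0,t_f)$ and blow-up before the fixed time $t_f$ is excluded. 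Your post-$t_f$ analysis and the $\mathbb{C}^{m-1}$ regularity bootstrap up the $X$-cascade are fine, but without (i) a correct formulation of the pre-$t_f$ closed loop and (ii) a state-dependent growth bound on $\gamma$, the existence-and-uniqueness claim on $[0,\infty)$ is not yet established.
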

\begin{proof}
It is obtained from $(u(\cdot~,0),X(0),Y(0))\in \mathbb{C}^{m-1}\allowbreak([0,1])\times\mathbb{R}^m\times\mathbb{R}^n$ and the transformations \eqref{zi}--\eqref{hi}, \eqref{ker}, \eqref{ri}--\eqref{tau} that $(w(\cdot,0),R(0),Z(0))\in \mathbb{C}^{m-1}([0,1])\times\mathbb{R}^m\times\mathbb{R}^n$. According to Remark \ref{remark2} and [Proposition 1, \cite{safe5}], for the target system consisting of \eqref{obj1}--\eqref{obj5} and \eqref{gam1}, we have that $(w(\cdot,t),R(t),Z(t))\in \mathbb{C}^{m-1}([0,\infty)\times[0,1])\times\mathbb{C}^0([0,\infty);\mathbb{R}^m)\times\mathbb{C}^0([0,\infty);\mathbb{R}^n)$ in the weak sense. Recalling the inverse transformations \eqref{inveryi}--\eqref{invertau}, this proposition is then obtained.
\end{proof}
The result of safe delay-adaptive control is presented as follows. 
\begin{theorem}\label{theo2}
	For the initial condition $Y(0)\in\mathbb{R}^n,X(0)\in\mathbb{R}^m$ satisfying Assumptions \ref{assum4}, \ref{assum5}, the history input signal $x_1(t)=0$ on $t\in[-D,0)$, and a target trajectory $s(t)$ satisfying Assumption \ref{as:sa}, choosing the design parameters $k_1,\cdots,k_n,c_1,\cdots,c_m$ satisfying \eqref{kia}, \eqref{cja}, the closed-loop system \eqref{equ1}--\eqref{equ4} with the safe delay-adaptive controller \eqref{ua} has the following properties
	\begin{enumerate}
		\item The delay estimation $\hat{D}(t)$ is bounded and reaches the true value in finite time $t_f$.
		\item The output $y_1(t)$ exponentially tracks the target trajectory $s(t)$ in the sense that $|y_1(t)-s(t)|$ exponentially converges to zero, and all plant states, i.e, $\Psi(t)$ given by \eqref{eq:Psi}, are bounded, and the ultimate bound depends on the target trajectory. If the target trajectory $s(t)\equiv0$, then $\Psi(t)$ is exponentially convergent to zero.
		\item {The safety is ensured in the sense that $y_1(t)-s(t)\geq0$ hold on $t\geq 0$}. 
	\end{enumerate}
\end{theorem}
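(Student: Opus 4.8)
The plan is to prove the three properties of Theorem \ref{theo2} in sequence, mirroring the structure of the proof of Theorem \ref{theo1} but carefully tracking the effect of the delay-estimation error through the external signal $\gamma(t)$ defined in \eqref{gam}.

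\textbf{Property 1 (finite-time exact identification).} This follows essentially verbatim from the analysis in \cite{delay9}: the delay identifier \eqref{iden} lives in the compact set $D_0$, so $\hat D(t)$ is bounded by construction, and the BaLSI argument shows that once a nonzero signal $u(\cdot,t)\neq 0$ has propagated through the transport PDE, the moment equations $G_n\ell=F_n$ pin down $\ell=D$ uniquely, so the estimate reaches the true value at the finite triggering time $t_f$. The only new ingredient is to check that the signal $u(x,t)$ is indeed not identically zero for all time (otherwise $t_f=\infty$); this is guaranteed because the safe controller $U_a(t)$ is nonzero in general and the initial/boundary data are such that $u$ becomes nontrivial — the degenerate all-zero case corresponds to $s\equiv0$ with zero initial data, in which case safety and convergence are trivial. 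I would cite \cite{delay9} for the bulk of this and only remark on well-posedness (Proposition \ref{pro1}) guaranteeing the identifier's integrals are well-defined.

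\textbf{Property 2 (tracking and boundedness).} Here the key observation is Remark \ref{remark2}: $\gamma(t)$ depends only on $Y$-side quantities (the predictor $P$, hence $\hat D$ versus $D$, and the trajectory derivatives), and is independent of the $R$-system, so it acts as a genuine exogenous input to \eqref{gam1}. First I would show $\gamma(t)$ is bounded on $[0,t_f)$ and identically zero on $[t_f,\infty)$: boundedness on the finite interval $[0,t_f)$ follows from continuity of $\mathcal U(t,\cdot)$ in its delay argument over the compact set $D_0$ together with the a priori boundedness of the state on any finite interval (Proposition \ref{pro1}), and $\gamma\equiv0$ after $t_f$ because then $\hat D(t_i)=D$ so $U_a=U_d=U$ and the QP filter is inactive. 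Consequently, for $t\ge t_f$ the target system is exactly \eqref{obj1}--\eqref{obj6}, and Lemmas \ref{lem:targetstability}, \ref{lemma:deltarget} give exponential decay of $\Omega(t)$ and of the predictor $\delta$ from the new "initial time" $t_f$; on $[0,t_f)$ all signals are bounded by well-posedness. Then the invertibility of the backstepping transformations (Proposition \ref{leminver1}) transfers exponential convergence of $z_1$ to exponential convergence of $|y_1(t)-s(t)|$, and transfers boundedness (and, when $s\equiv0$, exponential convergence to zero) of $(Z,w,R,\delta)$ back to boundedness of $\Psi(t)$ defined in \eqref{eq:Psi} — this last part is word-for-word the argument already given in the proof of Theorem \ref{theo1}, Property 1.

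\textbf{Property 3 (safety for all $t\ge0$).} On $t\in[0,D]$ there is no control action reaching the $Y$-system (history input is zero by hypothesis and $D\le t$ has not elapsed), so $y_1(t)-s(t)=P_1(t-D)-s(t)>0$ by \eqref{eq:assum4} in Assumption \ref{assum4}, exactly as in \eqref{zd0}. For $t\ge D$ I would use the target-system safety chain: the QP filter \eqref{ua} enforces $u\in C(t)\subseteq\mathcal S(t)$ for all $t\in[0,t_f)$ regardless of the sign of $b$ and regardless of the true $D$ (because $C(t)$ takes the worst case over $\mathcal D\in D_0$), which by \eqref{uc}--\eqref{safecon1} forces $\dot r_m(t)\ge -\overline c\, r_m(t)$, hence $r_m(t)>0$ on $[0,t_f)$ given $r_m(0)>0$; on $[t_f,\infty)$ we have $\gamma\equiv0$ so \eqref{obj6} gives $r_m(t)=r_m(t_f)e^{-c_m(t-t_f)}>0$. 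With $r_m>0$ for all $t\ge0$, the cascade structure \eqref{obj5} and the positivity of the initial data $r_j(0)>0$ (which still holds because \eqref{kia},\eqref{cja} is a subset of \eqref{ki},\eqref{ci}, as noted in the text) recursively give $r_j(t)>0$ for all $j$, hence $w(0,t)=r_1(t-D)\ge0$ for $t\ge D$; finally the $Z$-cascade \eqref{obj1}--\eqref{obj2} together with $z_i(D)>0$ (Lemma \ref{lemmasafe2}, whose hypotheses hold by \eqref{kia}) yields $z_1(t)=y_1(t)-s(t)\ge0$ on $[D,\infty)$. Combining the two intervals proves safety for all $t\ge0$.

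\textbf{Main obstacle.} The delicate point is establishing rigorously that $\gamma(t)$ is well-defined and bounded on $[0,t_f)$ \emph{without} already knowing the state is bounded — i.e., ruling out finite-time escape before $t_f$. I would handle this via Proposition \ref{pro1}'s local-in-time well-posedness together with the observation that $\gamma$ enters only the $r_m$-equation and that the $r_m,\dots,r_1,Z,w$ cascade is input-to-state stable with respect to $\gamma$, so on the finite window $[0,t_f]$ the solution cannot escape; then $\gamma$, being a continuous function of a bounded state on a compact interval, is itself bounded. The second subtlety is confirming the worst-case safe region $C(t)$ in \eqref{saferegion} genuinely contains the true-delay safe region $\mathcal S(t)$ for both signs of $b$ — this requires checking the $\max$/$\min$ in \eqref{ua} picks the conservative side correctly, which is a sign-chasing argument I would state but not belabor, referring to \cite{safe5}.
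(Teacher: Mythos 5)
Your Properties 1 and 3 follow the paper's own proof almost exactly: Property 1 is delegated to \cite{delay9}, and Property 3 is the same argument the paper gives, since your statement ``the filter enforces $\dot r_m(t)\ge-\overline c\,r_m(t)$'' is just the paper's equation \eqref{gam2} with $\overline\gamma(t)=bU_a(t)-b\mathcal{U}^*(t,D)\ge0$, followed by the same positivity cascade through $r_j$, $w(0,t)=r_1(t-D)$, and $z_i$, and the same use of \eqref{eq:assum4} on $[0,D]$. The only genuine divergence is in Property 2, on the interval $[0,t_f)$: you obtain boundedness there from the global well-posedness of Proposition \ref{pro1} (continuity on a compact interval) and then restart Lemma \ref{lem:targetstability} at $t_f$, whereas the paper keeps the perturbed Lyapunov inequality $\dot V\le-\varrho V+\bar b\gamma^2+\rho r_m\gamma$ and closes it with the bound $\gamma^2(t)\le\Upsilon\Omega(t)$ (a consequence of Remark \ref{remark2} and the transformations), which yields $\dot V\le(\varrho_0-\varrho)V$ before $t_f$ and hence a single estimate $\Omega(t)\le\frac{\theta_2}{\theta_1}\Upsilon_0\Omega(0)e^{-\varrho t}$ valid for all $t\ge0$ with the explicit overshoot constant $\Upsilon_0=e^{|\varrho_0-\varrho|t_f+\varrho t_f}$. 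Both routes prove the theorem as stated; the paper's buys explicit constants in terms of $\Omega(0)$ and, notably, dissolves the ``main obstacle'' you flag: since $\gamma$ is dominated by the state norm $\Omega(t)$, no separate a priori boundedness or ISS argument is needed to rule out escape before $t_f$, while your version leans on Proposition \ref{pro1} for that (legitimate, since the paper establishes it beforehand and itself invokes it for continuity of $V$ and of $r_m$ at $t_f$), at the cost of a less quantitative bound on the pre-identification transient.
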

\begin{proof}
	
	\textbf{1)} The proof of Property 1 can be found in Sec. \uppercase\expandafter{\romannumeral 4}-C of \cite{delay9}.\par
	\textbf{2)} Recalling the Lyapunov function $V(t)$ \eqref{v}, choose the analysis parameters as $a_0>\overline{D},a_i>0, i=1,\cdots,m$, i.e., replacing the unknown $D$ in the condition \eqref{ai} by the known bounds $\underline{D},\overline{D}$. Recalling \eqref{gam1} in the target system of the adaptive case, the inequity \eqref{bi} now becomes $
\sum_{i=1}^{m}a_ieD^{i-1} r_1^{(i)}(t)^2\leq\sum_{i=1}^{m}b_i r_i(t)^2+\bar{b}\gamma^2(t),$
where $b_i,\bar{b}$ depend on the  upper bound $\overline{D}$, the design parameters $c_i$ in \eqref{cja}, and the analysis parameters $a_i$. Implementing the process similar to \eqref{v1}--\eqref{lya} and choosing the design parameters $\rho$ \eqref{rho} as ${\rho}>\max\left\{ \frac{a_0e}{3\underline{D}}+\frac{1}{3}b_1,b_m,\frac{1}{2}b_j \right\}+1, j=2,\cdots,m-1$. The time derivative of $V(t)$ in \eqref{eq:Ly1} becomes
	\begin{align}
		\dot V(t)\leq-{\varrho} V(t)+\bar{b}\gamma^2(t)+{\rho} r_m(t)\gamma(t), \label{dv}
\end{align}
where ${\varrho}=\frac{1}{\theta_2}\min\left\{1, \frac{a_i}{2\overline{D}} \right\}, i=0,\cdots,m$.
For the time period $t\in[t_f,\infty]$, recalling property 1 in Theorem \ref{theo2} and \eqref{ua}, we know that $U_a(t)=U(t)$, i.e., $\gamma(t)=0$. Thus, for \eqref{dv}, we obtain 
$ V(t)\leq V(t_f)e^{-{\varrho}(t-t_f)},~ \forall t \ge t_f$.
From Remark \ref{remark2} and transformations \eqref{zi}--\eqref{hi}, \eqref{ker}--\eqref{delta}, one obtains  $\gamma^2(t)\le\Upsilon\Omega(t)$ for some positive $\Upsilon$, where $\Omega(t)$ is defined in \eqref{ome1}. Thus it follows \eqref{ome2}, \eqref{dv} that $\dot V(t)\leq-{\varrho} V(t)+\varrho_0V(t), t\in[0,t_f)$ for some positive $\varrho_0$. We then get that $V(t)\leq V(0)e^{|\varrho_0-{\varrho}|t}, t\in[0,t_f)$. Considering the continuity of $V(t)$ by recalling Proposition \ref{pro1}, we have that
	$
	V(t_f)\leq V(0)e^{|\varrho_0-{\varrho}|t_f}.
$	
From the above relation, it further implies that
\begin{equation}
	V(t)\leq V(0)e^{|\varrho_0-{\varrho}|t_f+{\varrho}t_f}e^{-{\varrho}t}
\end{equation}
for $t\in[0,\infty)$.
According to \eqref{ome3}, we have that $\Omega(t)\leq\frac{\theta_2}{\theta_1}\Upsilon_0\Omega(0)e^{-\varrho t},$ where $\Upsilon_0=e^{|\varrho_0-{\varrho}|t_f+{\varrho}t_f}$. Through the following process in the proof of property 1 in Theorem \ref{theo1}, we obtain the property 2 in this theorem. \par

\textbf{3)} Implementing $U_a(t)$ as the input into the original system \eqref{equ1}--\eqref{equ4} and recalling the transformation \eqref{zi}--\eqref{hi}, \eqref{ri}--\eqref{tau} , we have
 \begin{equation}
	 \dot{r}_m(t)=-\bar{c}r_m(t)+\overline{\gamma}(t),~t\in[0,t_f) \label{gam2}
	\end{equation} 
where $\overline{\gamma}(t)=bU_a(t)-b\mathcal{U}^*(t,D)\ge0$ recalling \eqref{u}, \eqref{safecon1}--\eqref{ua}. Since $z_i(D),r_j(0)$ are positive under the choice of the design parameters $k_i,c_j$ in \eqref{kia}, \eqref{cja}, we have that $r_m(t)>0, t\in[0,t_f)$ from the structure of \eqref{gam2}. Considering the continuity of $r_m(t)$ in Proposition \ref{pro1}, we obtain $r_m(t_f)=r_m({t_f}^-)>0$. When the adaptive input signal $U_a(t)$ is equal to $U_d(t)$ at the moment $t=t_f$, the solution $r_m(t)$ is $r_m(t)=r_m(t_f)e^{-c_m(t-t_f)}$. Thus, the  nonnegativity of $r_m(t)$ can be ensured during $t\in[0,\infty)$. It implies that all states $r_j(t), z_i(t)$ are all nonnegative from the same process in the proof of Lemmas \ref{lemmasafe1},\ref{lemmasafe2} and property 2 in Theorem \ref{theo1}. The safety, i.e., $z_1(t)=y_1(t)-s(t)\ge0,t>0$, is then proved.

The proof of this theorem is complete. 
\end{proof}

\section{Application in Safe Vehicle Platooning}\label{secexample}
\begin{figure}[t]
	\centering	
	\includegraphics[width=1\linewidth, height=0.2\textheight]{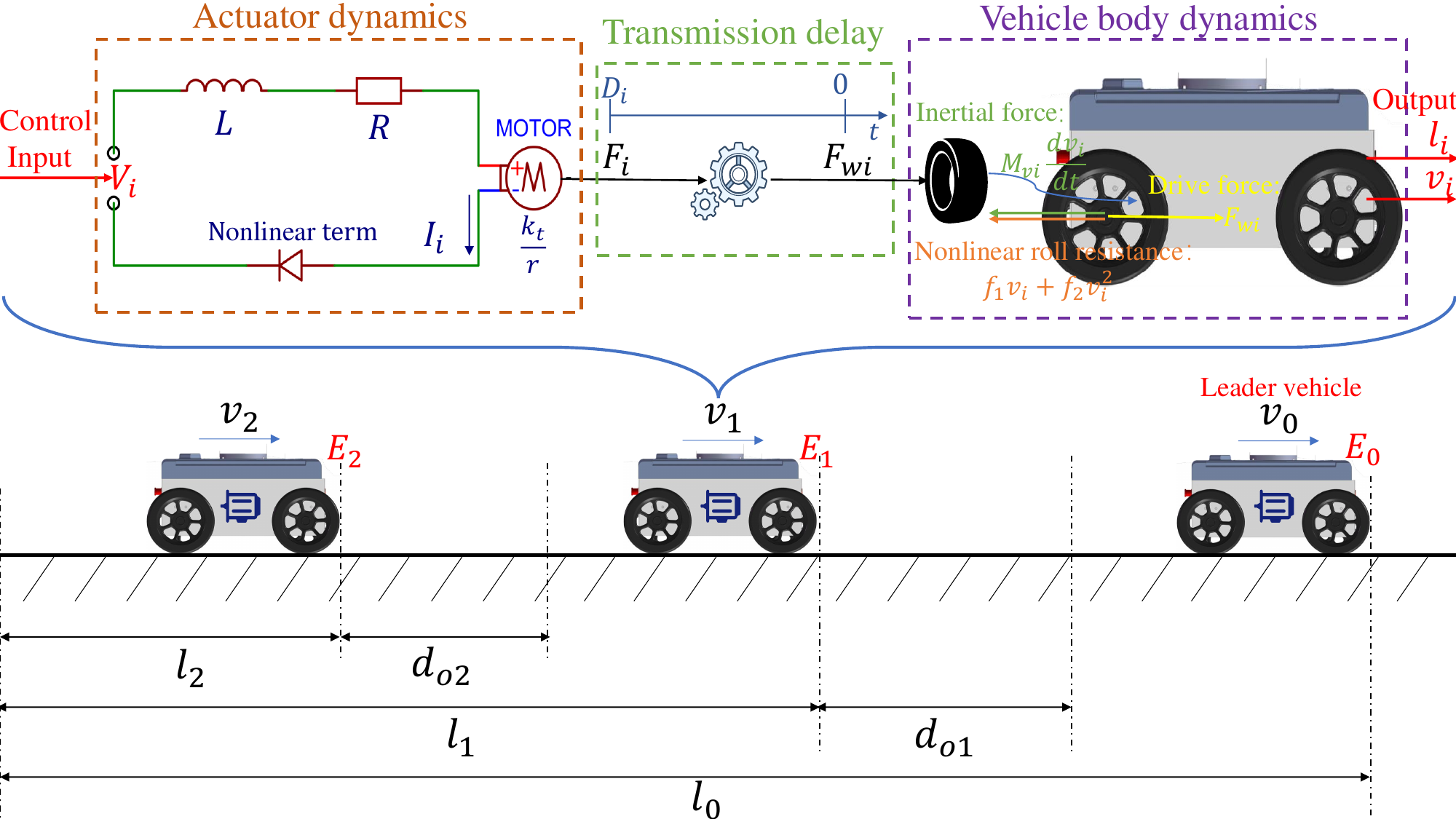}
	\caption{ Vehicle platooning with leader $E_0$, and the followers $E_i,i=1,2$, where the safe distances to be maintained are $d_{oi}$.}		
	\label{fig：0}
\end{figure}
\subsection{Physical model}
In this section, we validate the effectiveness of the designed controller in the practical application of safe vehicle platooning by simulations. Considering the scenario described in Fig. \ref{fig：0}: an electric vehicle $E_0$ is the leader of the vehicle string and is traveling at speed $v_0(t)$. The vehicle $E_1$ is running behind the vehicle $E_0$ at speed $v_1(t)$, and the third vehicle $E_2$ is traveling behind the vehicle $E_1$ at speed $v_{2}(t)$. The control task is to achieve a safe formation of these three vehicles while avoiding collisions, i.e., regulating the distances between adjacent vehicles to converge to the safe distances, denoted by $d_{o1}$, $d_{o2}$, but not to breach the safe distances all the time, i.e., $d_1(t)=l_0(t)-l_1(t)\geq d_{o1}$, $d_2(t)=l_1(t)-l_2(t)\geq d_{o2}$ and $d_1(t)$, $d_2(t)$ converge to $d_{o1},d_{o2}$, respectively. 

In $i$-th electric vehicle, the control input is the voltage of the electric motor, shown in Fig. \ref{fig：0}, whose dynamics are described by nonlinear ODEs:
\begin{equation}
	F_{i}=\frac{k_t}{r}I_i,\quad\dot{I_i}=-\frac{R}{L}I_i-a{I_i}^2+\frac{1}{L}\mathcal{V}_i,\label{veh3}
\end{equation}
where $k_t$ is the torque constant of the DC motor, $r$ is the length of moment arm,  $I_i$  is the motor current, $R$ is the resistance of the motor, $L$ is the inductance and $\mathcal{V}_i$ is the input voltage. To reduce the modeling error between the mathematical model and the practical model, we introduce an unmodeled nonlinear term $aI_i^2$ to approximate the nonlinear elements in the drive circuit where the coefficient $a$ is to be calibrated in practice by matching the mathematical and practical models. The output force $F_i$ of the motor is transmitted to the wheel, generating the wheel drive force ${F_w}_{i}$, through a set of transmissions, like a gearbox. There always exists a delay $D_i$, whose length is not easy to know exactly in advance, in such the transmission between the motor and the wheel, i.e.,
\begin{align}
{F_w}_{i}(t)=F_i(t-D_i).\label{eq:delaysim}
\end{align}
Based on the wheel drive forces ${F_w}_{i}$, according to \cite{acc1,bib25}, the dynamic of $i$-th vehicle is modeled as the following nonlinear ODE:
\begin{align}
	M_{vi}\frac{dv_i}{dt}=F_{wi}-f_1v_i-f_2{v_i}^2,\label{veh1}
\end{align}	
where $M_{vi}$ is the vehicle mass, and $f_2{v_i}^2$ describes the nonlinear damping force. The physical parameters used in the simulation are given in Tab. \ref{tab1}.
\subsection{Matching the physical model and the plant \eqref{equ1}--\eqref{equ4}}
 For $i$-th vehicle, setting $y_{i1}(t)=-l_i(t),y_{i2}(t)=-v_i(t)$, $x_{i1}(t)=F_{i}$ (i.e., $x_{i1}(t-D_i)=F_{wi}$), and $U_i=\frac{k_t}{rL}\mathcal{V}_i$, the physical model \eqref{veh3}--\eqref{veh1} is rewritten as
\begin{align}
	\dot y_{i1}(t)&=y_{i2}(t),    \\
	\dot y_{i2}(t)&=\frac{-f_1y_{i2}(t)+f_2{y_{i2}(t)}^2}{M_{vi}}-\frac{x_{i1}(t-D)}{M_{vi}},\\
 \dot x_{i1}(t)&=-\frac{R}{L}x_{i1}-a\frac{r}{k_t}x_{i1}^2+U_i(t),
\end{align}
which is covered by the considered plant (with the states $Y_i(t)=[y_{i1}(t),y_{i2}(t)]^T, X_i(t)=x_{i1}(t)$, where $i$ denotes $i$-th considered plants).
Considering the physical safety constraint $l_{i-1}(t)-l_i(t)-d_{oi}\geq0 $, we set the target trajectory of the $i$-th vehicle as
$s_i(t)=-l_{i-1}(t) +d_{oi}$, where $l_{i-1}(t)$ is the measurable displacement of the front vehicle.
We set that the initial position of the leader vehicle $E_0$ is $l_0(0)=10$, with the speed given by $v_1(t)=4+\sin(t)$ and thus the target trajectory for $E_1$ is $s_1(t)=-4t+\cos(t)-11+d_{o1}$. The target trajectory for vehicle $E_2$ is $s_2(t)=y_{11}(t)+d_{o2}$. The unknown delays of vehicles $E_1,E_2$ are given in Tab.\ref{tab1}, with known bounds as $\underline{D}=0.2$, $\overline{D}=4$. We only deal with the case $D_1\ge D_2$ in this application, considering that the controller of the vehicle $E_2$ requires the predicted value of the displacement $y_{11}$ of the vehicle $E_1$ in a time horizon of $D_2$, while the predictor time span in $E_1$ controller is $D_1$. 
Assuming the initial values of vehicles' speed and distance between them as $v_1(0)=1$, $v_2(0)=2$, $d_1(0)=5$, $d_2(0)=5$, then the initial conditions of $Y_i(t)$  are given as $y_{11}(0)=-5$, $y_{12}(0)=-1$, $y_{21}(0)=0$, $y_{22}(0)=-2$. Besides, we take $x_{11}(0)=2$,$x_{21}(0)=1$ with $x_{11}(t)\equiv0$, $x_{21}(t)\equiv0$ during the period $t\in[-D,0)$. They satisfy Assumptions  \ref{assum4}, \ref{assum5} regarding the initial conditions. 
\begin{table}[tb]
	\renewcommand{\arraystretch}{1.05}
	\centering\caption{Physical parameters in the vehicle platooning model}\label{tab1}
	\begin{tabular}{>{\scriptsize}p{22em}<{\raggedright} >{\scriptsize}p{3em}<{\centering}}
		\hline\\[-2.9mm]  
		\normalsize Parameters(units)& \normalsize Values\\	\\[-2.9mm]\hline
		Linear damping coefficient: $f_1$ ($\frac{N\cdot s}{m}$) & 5  \\
		Aerodynamic drag coefficient $f_2$ ($\frac{N\cdot s^2 }{m}$)& 0.25  \\
		Vehicle mass: $M_{vi}$ ($kg$) & 4     \\
		Target safe distance between $E_0$ and $E_1$: $d_{o1}$ ($m$) & 0.5 \\
		Target safe distance between $E_1$ and $E_2$: $d_{o2}$ ($m$) & 0.5 \\
		Transmission delay in $E_1$: $D_1$ ($s$) & 2.5  \\
		Transmission delay in $E_2$: $D_2$ ($s$) & 1.5  \\
		Torque constant of the DC motor: $k_t$ ($N\cdot m/A$)& 0.8  \\
		Length of moment arm: $r$ ($m$)	& 0.1		\\
		Resistance of the motor: $R$ ($\Omega$) & 5  \\
		Inductance of motor drive circuit: $L$ ($H$) & 0.05\\
      Calibrated coefficient of drive circuit nonlinearity: a& 1\\
		\hline
	\end{tabular}
\end{table}
\subsection{Controller}
 According to \eqref{u}, the nominal controller for this three-order system is 
\begin{align}
		U_i(t)=&-c_{i1}x_{i1}-\varphi_{i1}+\frac{c_{i1}}b\Delta_i(t)+\frac1b\Delta_i^{(1)}(t),\label{ut}
\end{align}
where
$\Delta_i(t)=-k_{i1}k_{i2}P_{i1}-(k_{i1}+k_{i2})(P_{i2}+\psi_{i1}({P}_{i1}))
	+k_{i1}k_{i2}s_{i}(t+D)+(k_{i1}+k_{i2}){s_{i}}^{(1)}(t+D)
		+{s_{i}}^{(2)}(t+D)-{\psi_{i1}}^{(1)}(P_{i1})-\psi_{i2}(\underline{P_i}_{2})$.
Considering the unknown delay, according to \eqref{ua}, the safe delay-adaptive controller is derived from the nominal safe delay-compensated controller \eqref{ut} based on the control design in Sec. \ref{secadaptive}.
According to \eqref{ki}--\eqref{c}, \eqref{kia}, \eqref{cja}, the design parameters for the nominal or adaptive controller are chosen as $k_{11}=k_{21}=3,k_{12}=k_{22}=2,c_{11}=c_{12}=2$. The initial value of $\hat{D}(t_i)$ in \eqref{iden} is defined as $\hat{D}(t_0)=0.2$. The delay-adaptive controller $U_d(t)$ \eqref{adacon} is constructed by replacing the unknown delay with the estimate $\hat{D}(t_i)$ in \eqref{ut}, and the safety filter is built by choosing $\bar{c}=2$ in \eqref{uc}, where it is required to seek the maximum or minimum of the signal  $\mathcal{U}^*(t,\mathcal{D})$ with respect to the delay variable $\mathcal{D}$. The implementation of seeking the maximum or minimum of  $\mathcal{U}^*(t,\mathcal{D})$ and that of the delay estimator are described in the following two remarks.
\begin{remark}[Seeking the maximum or minimum of $\mathcal{U}^*(t,\mathcal{D})$]\label{remarkimple}
	 We divide the known range $[\underline{D},\overline{D}]$ by the interval of $\diff_D$, i,e, each possible delay $\mathcal{D}(i)=\underline{D}+i\diff_D,i=0,\cdots,N_D$, where the number $N_D$ is a freely chosen positive integer (the larger $N_D$ is accompanied with higher accuracy but larger computation source). The predictor values under all possible delays $\mathcal{D}(i)$, $i=0,1,\cdots, N_D$ should be computed for each moment before $t_f$. We set a $N_D$-row data matrix to record the predictor values under all $\mathcal{D}(i)$. As mentioned in Remark \ref{remark1}, taking the same discretization step $d$,  there are $\lceil\frac{\mathcal{D}(i)}{d}\rceil$ predictor values in the row corresponding to the delay $\mathcal{D}(i)$, in the process of computing the prediction for $\mathcal{D}(i)$ (where $\lceil x \rceil$ is defined as the ceiling function: taking the least integer that is greater than or equal to $x$). By taking all the possible predictor values in \eqref{ua} and selecting the maximum or minimum value as a control input, it is ensured that the current control input is in a subset of the safe control region of the nominal safe controller. After $t_f$, i.e. the time when the exact delay identification is achieved, only the values in the row corresponding to the certain delay $\mathcal{D}(i)$ that is closest to the output of the delay estimator are reserved, to continue calculating the predictor values and the control signal along this delay. Then there is no need to calculate the predicted values of other rows anymore after $t_f$, and the control input stays in the original safe control region of the nominal control. 
\end{remark}
Besides, when seeking the maximum or minimum of $\mathcal{U}^*(t,\mathcal{D})$ for vehicle $E_2$, the predicted displacements of vehicle $E_1$ under all possible delays are required to join in this seeking process. Once the identification time $t_f$ is reached, only the predictor value $P_1(t-\hat{D}_1(t_f)+\hat{D}_2(t_f))$ based on $\hat{D}_1(t_f)$ is used in the design of $U_a(t)$ \eqref{ua}.
\begin{remark}[Implementation of the delay estimator]\label{remarkdelay}
The delay estimate $\hat{D}(t)$ is constructed using the finite difference method to approximate the integration with respect to the space variable according to Sec. \ref{identifier}, which may cause some errors between the identified value and true value. The smaller space step $\diff x$ and larger design parameter $T$ can reduce this approximation error, but they also increase the computational burden and computation time. If the difference between the estimates from the identifier at two adjacent updating times is smaller than $2\%$ of the true value, we consider that the approximation error causes this difference, and thus, we keep the estimated value the same as it was at the previous updating time. Additionally, we set an upper limit of $\overline{n}=3$ for $n$ in \eqref{iden} to save computation time for estimation. Other design parameters in the estimator \eqref{iden}--\eqref{mu} are selected as $\tilde{N}=5, T=3$.   
\end{remark}

\begin{figure}[!t]
	\centering	
	\includegraphics[width=0.9\linewidth, height=0.23\textheight]{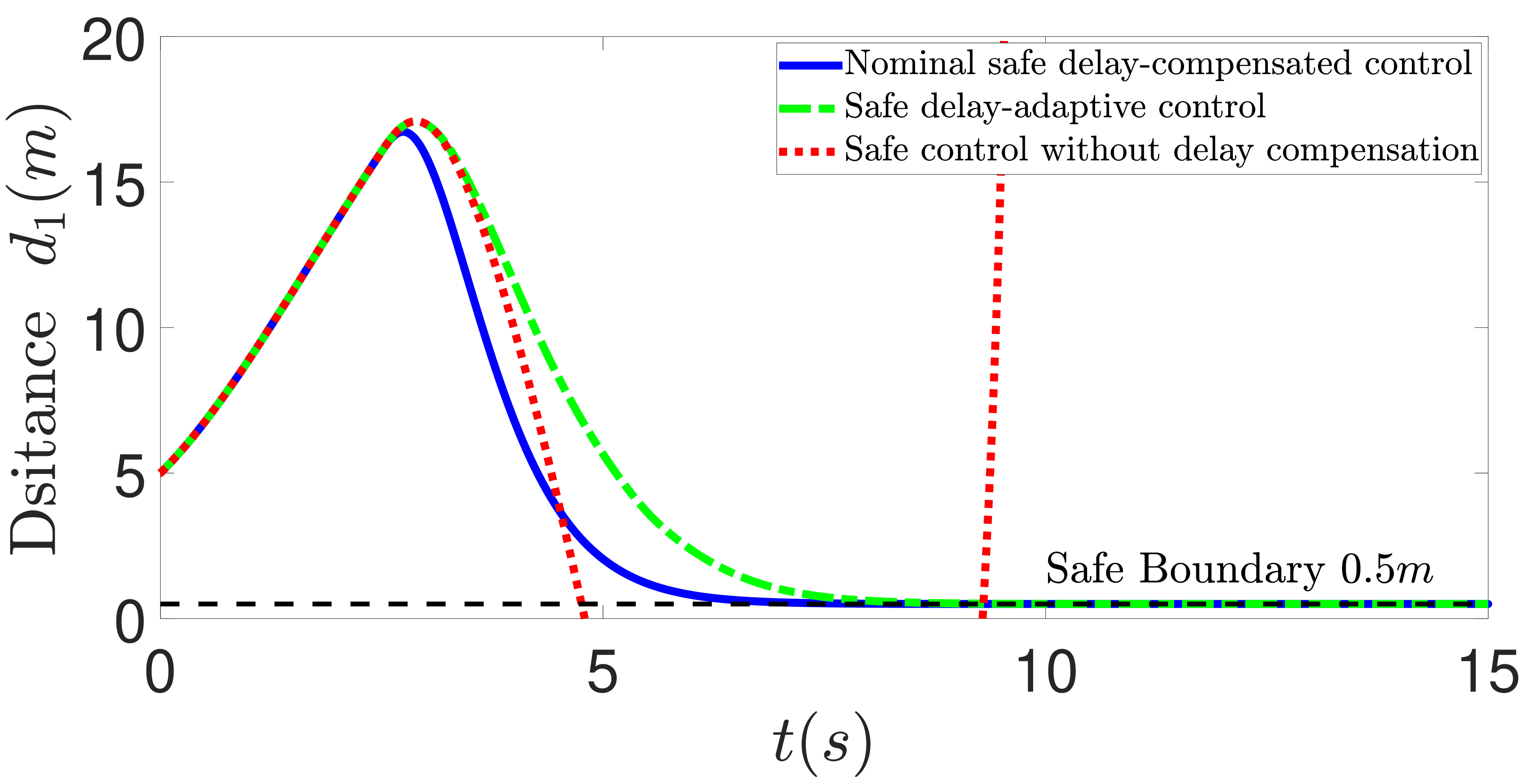}
	\caption{Results for the distance between vehicle $E_0$ and $E_1$, i.e., $d_1(t)=l_0(t)+y_{11}(t)$.}		
	\label{fig:1}
	\includegraphics[width=0.9\linewidth, height=0.23\textheight]{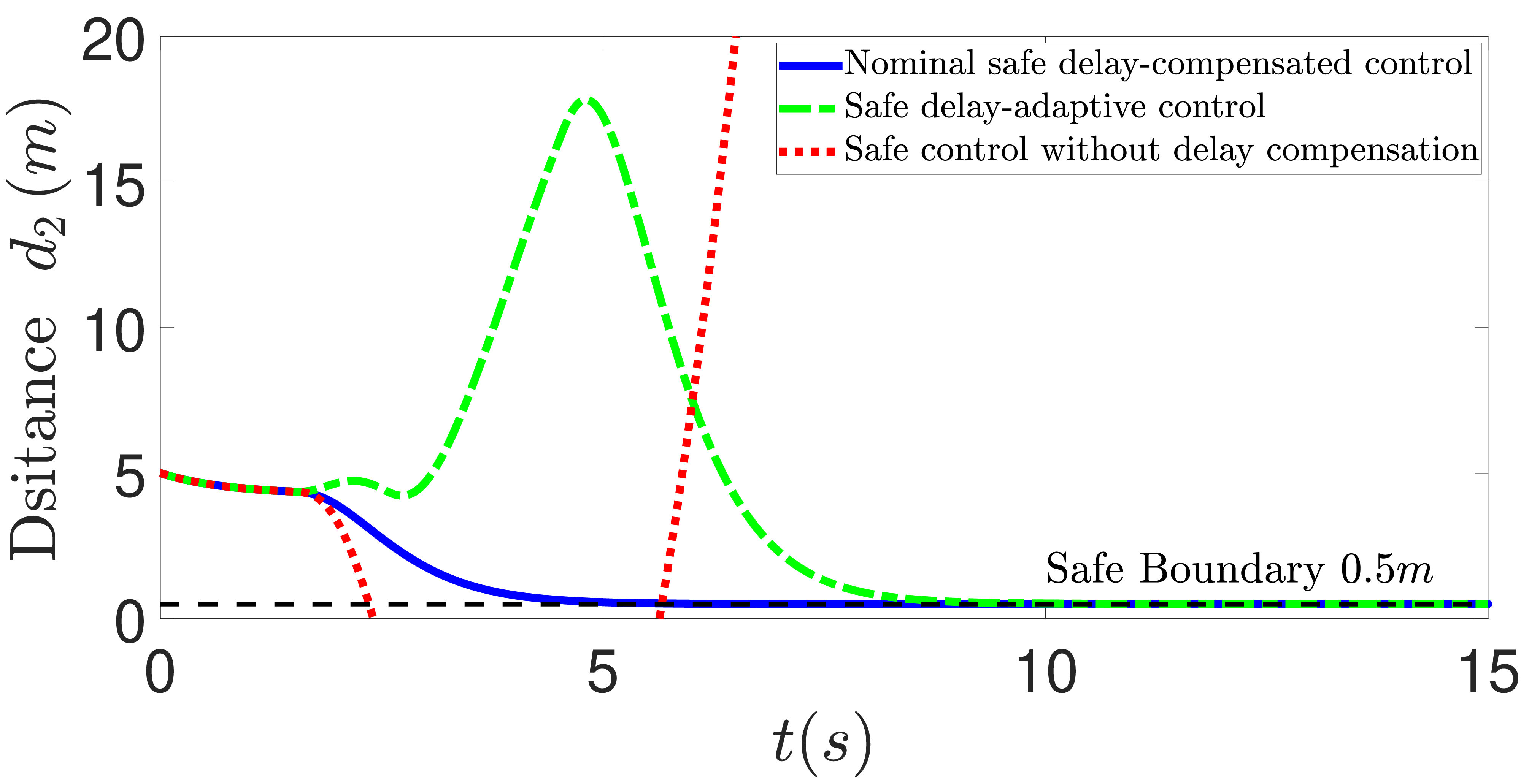}
	\caption{Results for the distance between vehicle $E_1$ and $E_2$, i.e., $d_2(t)=y_{21}(t)-y_{11}(t)$.}
	\label{fig:2}
	\subfloat[$v_1(t)$]{
		\includegraphics[width=0.48\linewidth, height=0.14\textheight]{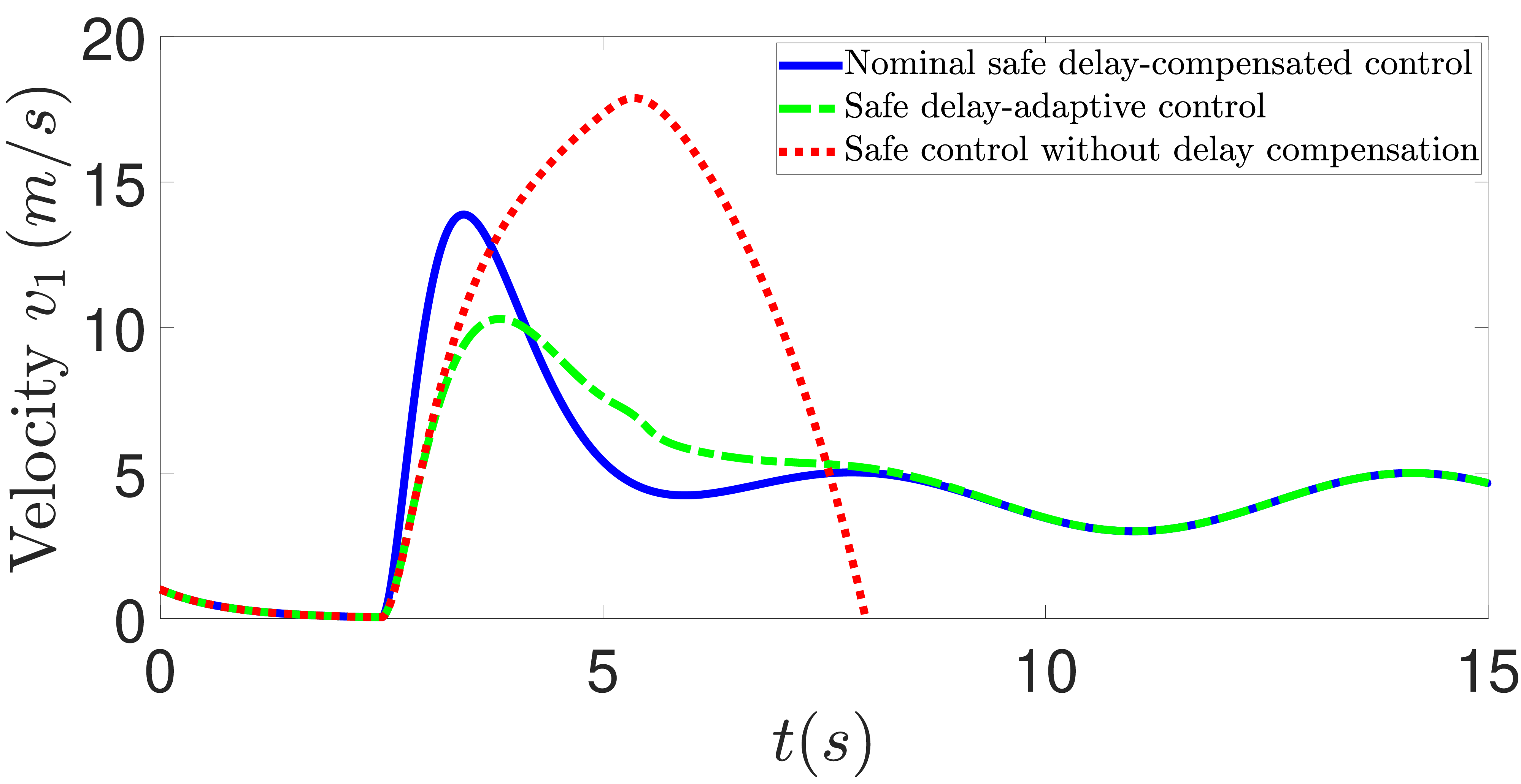}
		\label{fig:3.a}	}
	\subfloat[$v_2(t)$]{
		\includegraphics[width=0.48\linewidth, height=0.14\textheight]{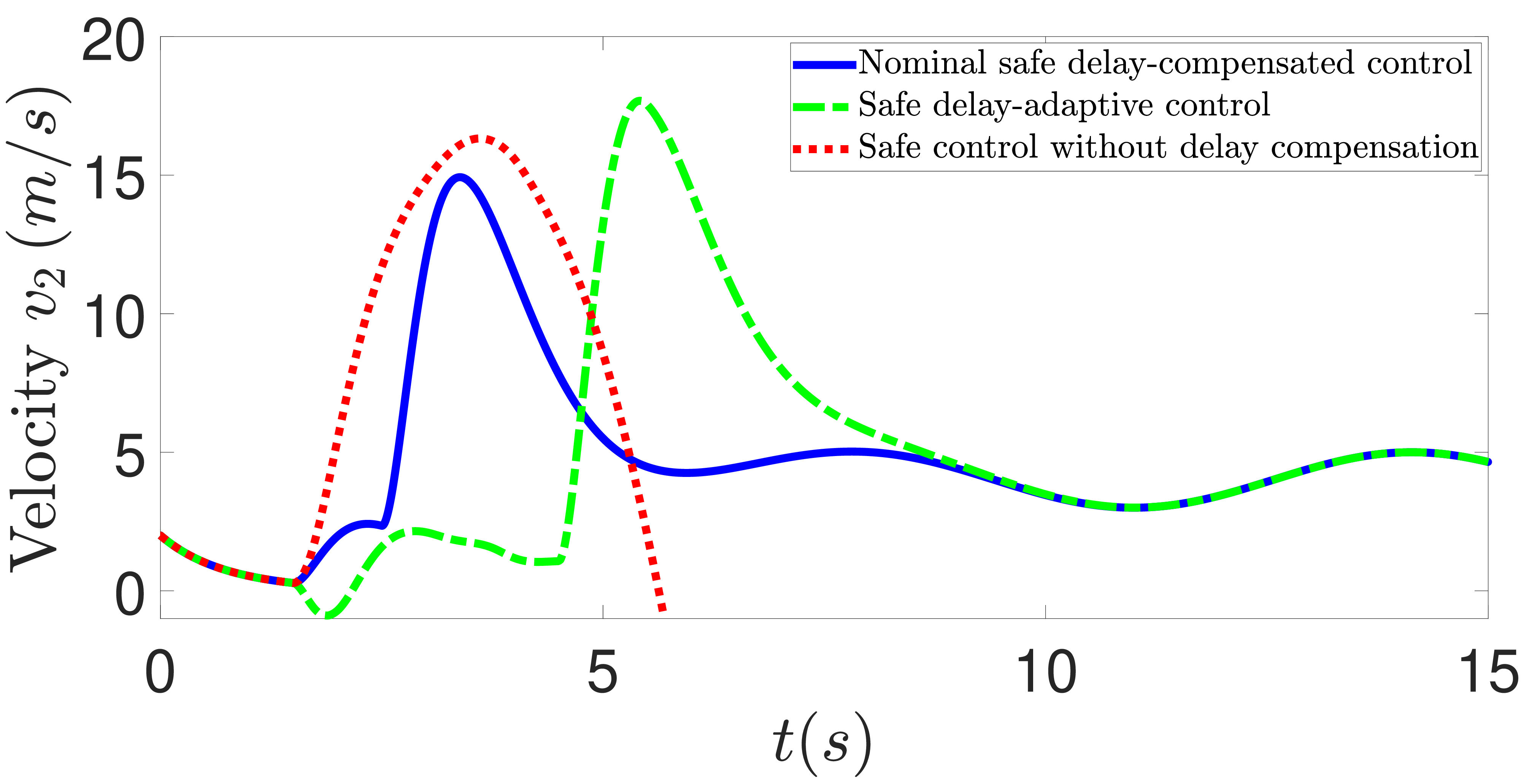}
    \label{fig:3.b}}
	\caption{Results for the velocities of followers $E_i$, i.e, $v_i=-y_{i2}$, $i=1,2$. }					\label{fig:3}
\end{figure}
\begin{figure}[!t]
		\centering	
		\subfloat[$F_1(t)$]{
		\includegraphics[width=0.48\linewidth, height=0.14\textheight]{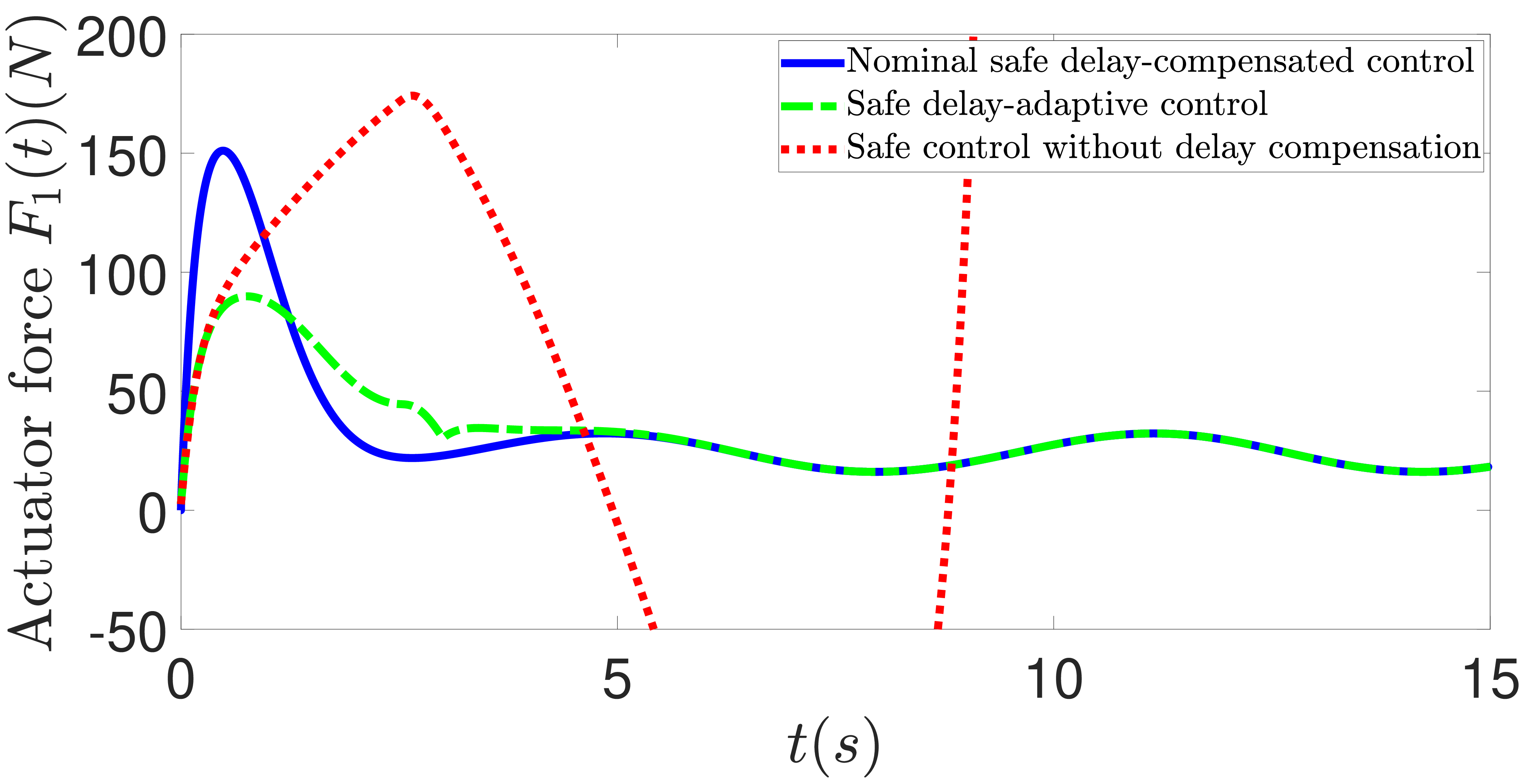}
		\label{fig:4.a}
	}
	\subfloat[$F_2(t)$]{
		\includegraphics[width=0.48\linewidth, height=0.14\textheight]{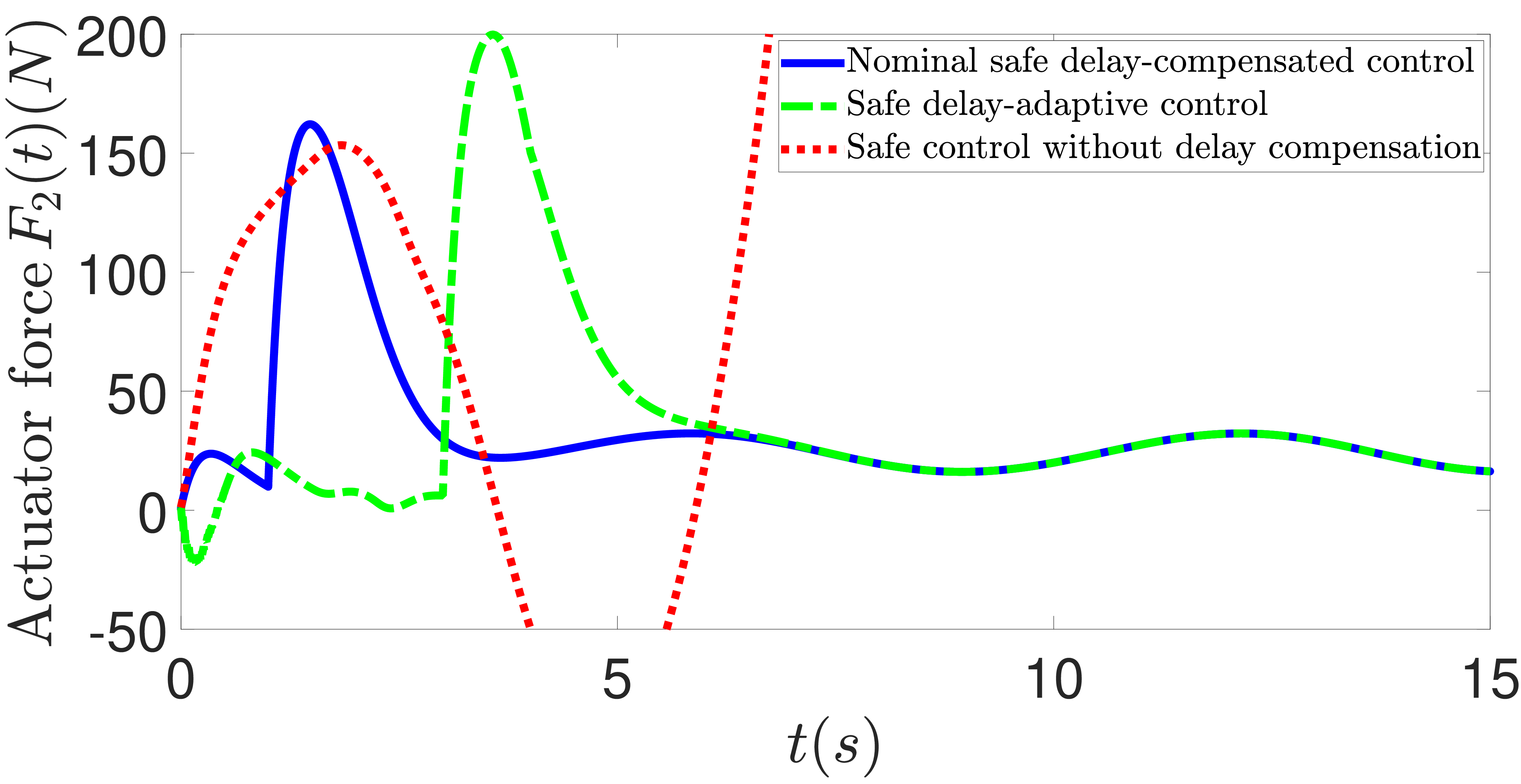}
		\label{fig:4.b}}\\
		\subfloat[$\mathcal{V}_1(t)$]{
		\includegraphics[width=0.48\linewidth, height=0.14\textheight]{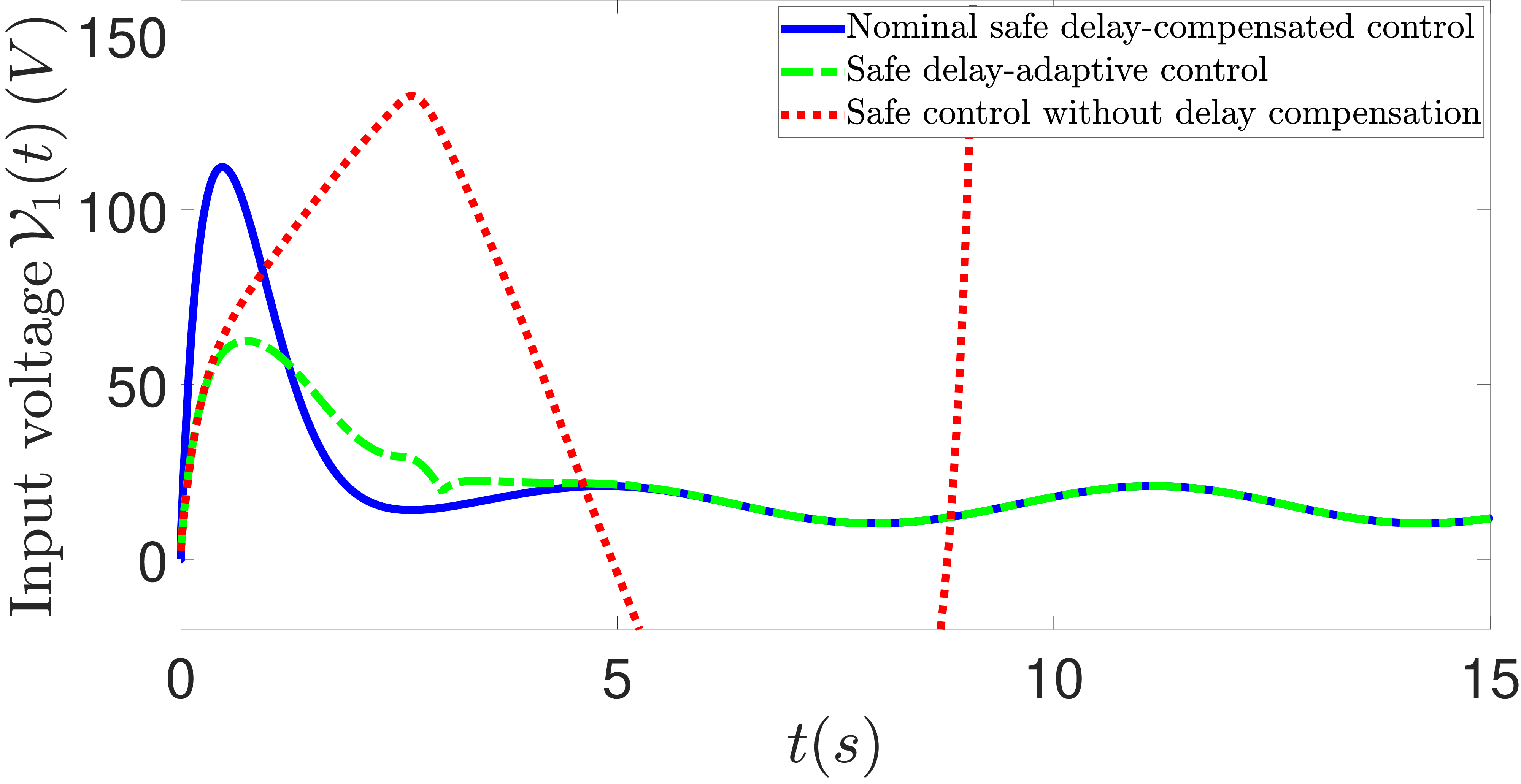}
		\label{fig:4.c}
	}
	\subfloat[$\mathcal{V}_2(t)$]{
		\includegraphics[width=0.48\linewidth, height=0.14\textheight]{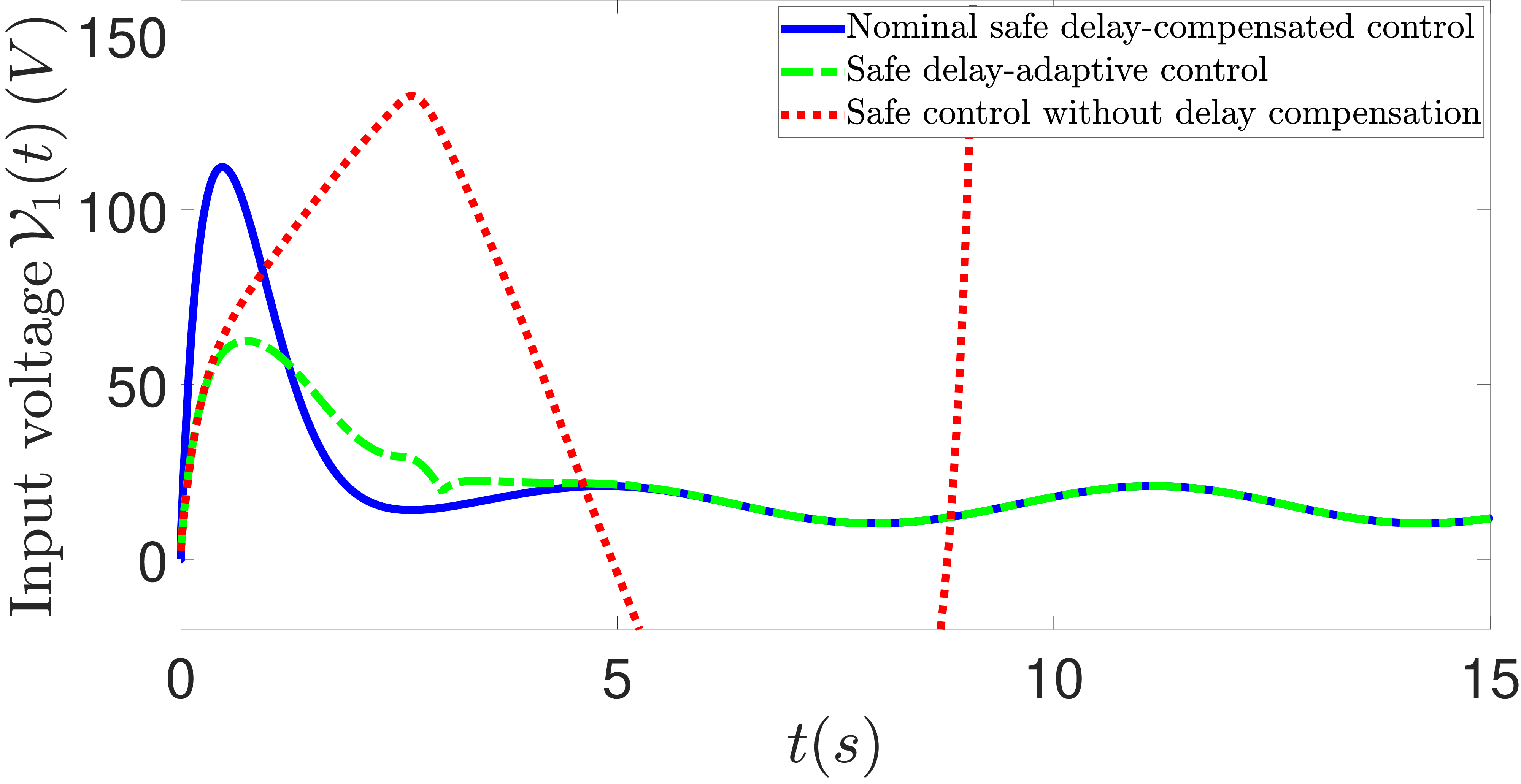}
		\label{fig:4.d}}
	\caption{Results for  the output force of the actuator  $F_i=x_{i1}(t)$ and the input voltage $\mathcal{V}_i(t)=\frac{rLU_i(t)}{k_t}$.}				\label{fig:4}
	\includegraphics[width=0.9\linewidth, height=0.23\textheight]{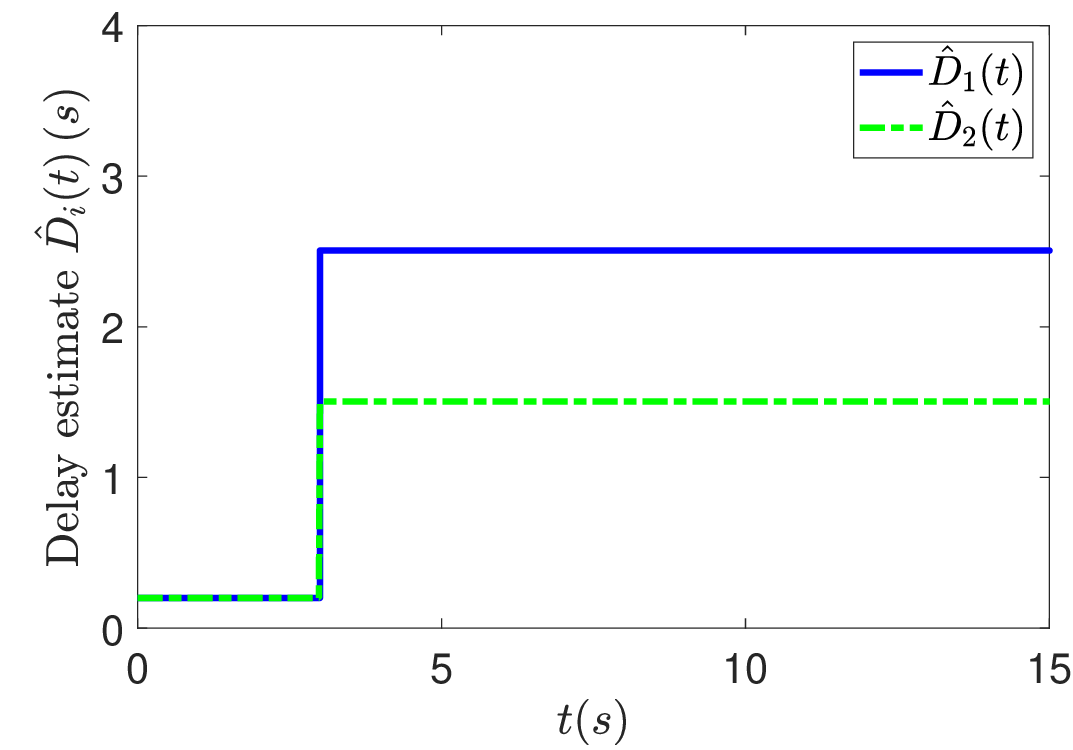}
	\caption{Estimates of the unknown delays $D_i,i=1,2$ under the initial estimates $\hat{D}_i(t_0)=0.2$.}		
	\label{fig:5}
\end{figure}

\subsection{Simulation results}
The simulation, including the implementation of the predictor \eqref{pred1} and identifier \eqref{iden}, is performed using the finite difference method with a time step $\diff t=0.001$ and a space step $\diff x=0.02$. As mentioned in Remark \ref{remarkimple},  the interval of $\mathcal{D}(i)$ is taken as $\diff _D=0.01$.
In addition to the nominal safe delay-compensated controller and the safe delay-adaptive controller, as a comparison, we also apply a safe controller without delay compensation, i.e., replacing the predictor states in \eqref{ut} with the states $Y(t)$.

 The simulation results are shown in Figs. \ref{fig:1}--\ref{fig:4}, where the blue line represents the results under nominal safe delay-compensated control, the green dot-dashed line denotes the results under safe delay-adaptive control, and the red dotted line shows the results under safe control without delay compensation. The results regarding the output states $y_{i1}(t),i=1,2$ of the $i$-th plant considered in this paper, i.e., the distances $d_i=y_{i1}+l_{i-1}$ between vehicles $E_i$ and $E_{i-1},i=1,2$  in practice,  are illustrated in Figs. \ref{fig:1}, \ref{fig:2}. We can see that the vehicle distances are convergent to the pre-set safe values $d_{o1}, d_{o2}$ respectively, and never exceed the safe boundary in the entire control process under the nominal and adaptive controllers. For the safe controller without delay compensation, the distances between vehicles undergo large oscillations, breaching the safety constraint and ultimately diverging due to the effects of the delay and nonlinearity. Compared with the nominal control, even though the results under the safe delay-adaptive controller exhibit greater conservatism with respect to safety in the process of delay identification, they have similar behavior ultimately, after the effective estimate of the unknown is obtained. Due to the uncertainty of the predicted information of vehicle $E_1$, which is required in the controller of $E_2$, the adaptive results of $E_2$ exhibit greater conservatism with respect to safety than $E_1$ in the process of delay identification, as shown in Figs. \ref{fig:1}, \ref{fig:2}. We also know from Figs. \ref{fig:3.a}, \ref{fig:3.b} that the velocities of follower vehicles $E_1, E_2$ converge to the target speed (i.e.,$v(t)=4+\sin t$) of the leader vehicle $E_0$ under adaptive and nominal control, while they diverge in the case without delay compensation. In Figs. \ref{fig:1}--\ref{fig:3}, the results under the three controllers are identical before the delay time $t=D_i$ because the vehicle behaviors only depend on the initial data on $t\in[0,D)$ and there is no the control action. The responses of the actuator output $F_i=x_{i1}(t), i=1,2,$ are shown in Figs. \ref{fig:4.a}, \ref{fig:4.b}, where they converge to constant values under adaptive and nominal control. The input voltages (i.e., control signals) of the two vehicles are depicted in Figs. \ref{fig:4.c}, \ref{fig:4.d} respectively. 
 Additionally, Fig. \ref{fig:5} presents the delay estimates from the delay identifier, where the blue line shows the estimate $\hat{D}_1(t)$ of the delay $D_1$ and the green dot-dashed line denotes the delay estimate $\hat{D}_2(t)$. Starting from the initial delay estimate $\hat{D}_i(t_0)=\underline{D}=0.2$, the successful identification of the unknown delay $D_i$ is achieved at the first triggering time $t_f=3$. Tiny differences exist between the delay estimates and the true values due to the errors in approximating integration as summation in the use of the finite difference method. 

\section{Conclusion and Future Work}\label{con}
In this paper, we design a safe delay-adaptive controller for a strict-feedback nonlinear system under a delayed nonlinear actuator, where the arbitrarily long delay $D$ between two nonlinear subsystems is unknown, based on the safe predictor-based backstepping transformation and a QP safety filter with BaLSI. Finally, we achieve exponential regulation of system states with a safety guarantee. The effectiveness of our design is verified in the application of safe vehicle platooning, ensuring vehicle string stability with a small gap and avoiding collisions at a relatively high speed in the presence of unknown delays.
In future work, we will consider external disturbances and measurement errors, which often occur in practical control design applications. We will also try to apply neural operators to improve the proposed controller's real-time efficacy in implementation by approximating the nonlinear ODEs as an open-loop flow map in the predictor \cite{neur1,neur2}.

\appendix
\begin{appendices}
\section{Proof of Proposition \ref{leminver1}} \label{dix1}
\subsection{Inverse transformation \eqref{inveryi}--\eqref{inverh}}\label{sec:inv1}
Considering $y_1(t)=z_1(t)+s(t)$ given by \eqref{zi} at $i=1$, we have $\psi_1(y_1(t))=\psi_1(z_1(t)+s(t)):=\bar{\psi}_1(z_1(t),s(t))$ which is continuously differentiable and satisfies $\bar{\psi}_1(0,0)=\psi_1(0)=0$ according to Assumption \ref{assum1}. Considering \eqref{equ1} in the original system and \eqref{obj1} in the target system, recalling \eqref{zi}, for $i=1$, we have
		\begin{align}
			y_2(t)&=\dot y_1(t)-\psi_1=\dot{z}_1(t)+s^{(1)}(t)-\bar{\psi}_1(z_1(t),s(t))\notag\\
			&=z_2(t)+\bar{h}_1({z}_1(t),{s}(t))+s^{(1)}(t),\label{inver y2}
		\end{align} 
		where $\bar{h}_1({z}_1(t),{s}(t))=-k_1z_1(t)-\bar{\psi}_1(z_1(t),s(t))$. From \eqref{inver y2} and $y_1(t)=z_1(t)+s(t)$, we have $\psi_2(\underline{y}_2(t)):=\bar{\psi}_2(\underline{z}_2(t),\underline{s}^{(1)}(t))$ by replacing $y_1,y_2$ with $z_1,z_2$ and $\underline{s}^{(1)}(t))$. It is obvious that $\bar{h}_1$,  $\bar{\psi}_2$ is continuously differentiable and $\bar{h}_1(0,0)=0$, $\bar{\psi}_2(0,0)=0$ from \eqref{inver y2}, Assumption \ref{assum1}, and $\bar{\psi}_1(0,0)=0$ shown above.
		Similarly, from \eqref{equ1}, \eqref{obj1} at $i=2$ and \eqref{inver y2}, one gets
		\begin{align}
			y_3(t)=&\dot y_2(t)-\psi_2=z_3(t)-k_2z_2(t)+s^{(2)}(t)-\bar{\psi}_2\notag\\
			&+\frac{\partial \bar{h}_{1}}{\partial z_1}\big(-k_1z_1+z_2\big)+\frac{\partial \bar{h}_{1}}{\partial s}s^{(1)}(t)\notag\\
			=&z_3(t)+\bar{h}_2(\underline{z}_2(t),\underline{s}^{(1)}(t))+s^{(2)}(t),\label{inver y3}
		\end{align}
		where $\bar{h}_2=-k_2z_2(t)-\bar{\psi}_2+\frac{\partial \bar{h}_{1}}{\partial z_1}\big(-k_1z_1+z_2\big)+\frac{\partial \bar{h}_{1}}{\partial s}s^{(1)}(t)$. 
		
We now prove the induction step: if  all the inverse transformations from $z_j(t)$ to $y_j(t)$, $j=2,\cdots,i$ for $i\le n-1$ are given as 
	\begin{equation}
		y_j(t)=z_j(t)+\bar{h}_{j-1}(\underline{z}_{j-1},\underline{s}^{(j-2)})+s^{(j-1)}(t),\label{inveryii}
	\end{equation}
	where $\bar{h}_{j}=-k_{j}z_{j}(t)-\bar\psi_{j} +\sum_{k=1}^{j-1}\big(\frac{\partial \bar{h}_{j-1}}{\partial z_k}\allowbreak\big(-k_kz_k+z_{k+1}(t)\big)+\frac{\partial \bar{h}_{j-1}}{\partial s^{(k-1)}}s^{(k)}(t)\big)$, $\bar{\psi}_{j}(\underline{z}_{j}(t),\underline{s}^{({j-1})}(t))=\psi_{j}(\underline{y}_{j}(t))$, $j=1,\cdots,i-1$ and for all $j$, $\bar{\psi}_{j}, \bar{h}_{j} $ are continuously differentiable function satisfying $\bar{h}_{j}(0,0)=0$, $\bar{\psi}_{j}(0,0)=0$, thus we have $y_{i+1}(t)=z_{i+1}(t)+\bar{h}_{i}+s^{(i)}(t)$  with continuously differentiable function $\bar h_i$, $\bar{\psi}_i$ satisfying $\bar h_i(0,0)=0, \bar{\psi}_i(0,0)=0$ as well. The proof of the induction step is given as follows. 
Substituting the induction hypothesis \eqref{inveryii} at $j=i$ into the \eqref{equ1} at $i+1$ in the original system, recalling \eqref{obj1} in the target system, one obtains 
$
		y_{i+1}(t)=\dot y_i(t)-\psi_i({\underline{y}_i})
		=z_{i+1}(t)+\bar{h}_i(\underline{z}_i(t),\underline{s}^{(i-1)}(t))+s^{(i)}(t)
$
	where $\bar{h}_i(\underline{z}_i(t),\underline{s}^{(i-1)}(t))=-k_iz_i-\bar\psi_i+\sum_{k=1}^{i-1}\big(\frac{\partial \bar{h}_{i-1}}{\partial z_k}\allowbreak\big(-k_kz_k+z_{k+1}\big)+\frac{\partial \bar{h}_{i-1}}{\partial s^{(k-1)}}s^{(k)}(t)\big)$,
	and $\bar{\psi}_{i}(\underline{z}_{i}(t),\underline{s}^{({i-1})}(t))=\psi_{i}(\underline{y}_{i}(t))$ by replacing $\underline{y}_{i}(t)$ in $\psi$ with $\underline{z}_{i}(t)$, $\underline{s}^{({i-1})}(t))$ using the relations from $z_j$ to $y_j$, $j=2,\cdots,i$, given by \eqref{inveryii}. The function $\bar\psi$ is continuously differentiable
 because all $\bar h_j, j=1,\cdots,i-1$ is continuously differentiable, and satisfies $\bar{\psi}_{i}(0,0)=\psi_{i}(0)=0$ because of \eqref{inveryii}, $\bar h_j(0,0)=0, j=1,\cdots,i-1$, and Assumption \ref{assum1}. It implies that the function $\bar h_{i}$ is continuously differentiable and $\bar h_i(0,0)=0$. The proof of the induction step is complete.
 
Starting from the base cases $y_1(t)=z_1(t)+s(t)$, \eqref{inver y2}, \eqref{inver y3}, and applying the induction step proved above, the inverse transformation \eqref{inveryi}--\eqref{inverh} is verified. 
\subsection{Inverse transformation \eqref{inverker}}
Recalling \eqref{equ2}, \eqref{uxt}, we have
\begin{align}
    \dot y_n(t+Dx)=\psi_n(\underline{y}_n(t+Dx))+u(x,t).\label{eq:ynin}
\end{align}
Taking the time derivative of \eqref{inveryi} at $i=n$, replacing the current states $y_n,z_i$ by the predictor states $p_n,\delta_i$, applying $\dot p_n(x,t)=\psi_n(\underline{p}_n(x,t))+u(x,t)$ obtained from \eqref{eq:ynin}, we then have
	\begin{align}	&\psi_n(\underline{p}_n(x,t))+u(x,t)=\frac{\partial {\delta_n}(x,t)}{\partial t}+s^{(n)}(t+Dx)\notag\\
	&+\sum_{k=1}^{n-1}\big[\frac{\partial \bar{h}_{n-1}(\underline{\delta}_{n-1}(x,t),\underline{s}^{(n-2)}(t))}{\partial \delta_k(x,t)}\big(-k_k\delta_k(x,t)\notag\\
	&+\delta_{k+1}(x,t)\big)+\frac{\partial \bar{h}_{n-1}}{\partial s^{(k-1)}}s^{(k)}(t+Dx)\big].\label{eq:pnxt}
	\end{align}
	Recalling $\psi_n(\underline{p}_n(x,t))=\bar{\psi}_{n}(\underline{\delta}_{n}(x,t),\underline{s}^{({n-1})}(t+Dx))$, plugging $\frac{\partial {\delta_n}(x,t)}{\partial t}=-k_n\delta_n(x,t)+w(x,t)$ obtained  from \eqref{obj2} into \eqref{eq:pnxt}, one gets
		\begin{align}
				&u(x,t)=w(x,t)+\Big(-k_n\delta_n(x,t)-\bar{\psi}_{n}\notag\\
		&+\sum_{k=1}^{n-1}\Big[\frac{\partial \bar{h}_{n-1}}{\partial \delta_k(x,t)}\big(-k_k\delta_k(x,t)+\delta_{k+1}(x,t)\big)\notag\\
		&+\frac{\partial \bar{h}_{n-1}}{\partial s^{(k-1)}}s^{(k)}(t+Dx)\Big]\Big)+s^{(n)}(t+Dx). \label{inveruxt}
		\end{align} 
Recalling the definition of $\bar{h}_n$ in \eqref{inverh}, the inverse of the transformation \eqref{ker} is obtained as \eqref{inverker}.

\subsection{Inverse transformation \eqref{inver ri}--\eqref{invertau}}	
According to \eqref{inverker}, \eqref{eq:ux1}, \eqref{obj4}, we have $x_1=\frac{1}{b}r_1+\frac{1}{b}\bar h_n(\delta(1,t),\underline{s}^{({n-1})}(t+D))+\frac{1}{b}s^{(n)}(t+D)$, i.e., \eqref{inver ri} at $j=1$ with \eqref{inverdelta}. Through the recursive process similar to Sec. \ref{sec:inv1}, recalling \eqref{obj5}, \eqref{obj6} and \eqref{equ3}, \eqref{equ4},  as well as Assumption \ref{assum1}, the inverse of transformation \eqref{ri}--\eqref{tau} is obtained as \eqref{inver ri}--\eqref{invertau}, with continuously differentiable $\bar{\varphi}_j({\underline{r}_j(t)},\underline{\Delta}^{(i-1)}(t))=\varphi_j({\underline{x}_j})$ satisfying $\bar{\varphi}_j(0,0)=0$.

\section{The proof of Lemma \ref{lem:targetstability}} \label{dixlem1}
We construct the following Lyapunov function for the target system \eqref{obj1}--\eqref{obj6}, 
	\begin{align}
	V(t)=&\frac{1}{2}\sum_{i=1}^{n}z_i(t)^2+\frac{\rho}{2}\sum_{i=1}^{m}r_i(t)^2\notag\\
            &+\frac{1}{2}\sum_{i=0}^{m}\int^1_0a_ie^{ x} w^{(i)}_x(x,t)^2 \diff x,\label{v}
	\end{align}
	where  $\rho,a_1,\cdots,a_m$ are  positive analysis parameters that will be determined later. 
	According to \eqref{ome1},	we have
	\begin{equation}
		\theta _1 \Omega(t)\leq V(t)  \leq \theta _2 \Omega(t),\label{ome2}
	\end{equation}
	for some positive constants $\theta _1, \theta _2$.
	Taking the time derivative of $V(t)$ in \eqref{v}, one obtains 
	\begin{align}
			&\dot V(t)=-\sum_{i=1}^{n}k_iz_i(t)^2+\sum_{i=1}^{n-1}z_i(t)z_{i+1}(t)+w(0,t)z_n(t)\notag\\
			&-\rho\sum_{i=1}^{m}c_ir_i(t)^2+\rho\sum_{i=1}^{m-1}r_i(t)r_{i+1}(t)+\frac{a_0e}{2D} w(1,t)^2\notag\\
			&-\sum_{i=0}^{m}\frac{a_i}{2D}w_x^{(i)}(0,t)^2+\sum_{i=1}^{m}\frac{a_ie}{2D} w_x^{(i)}(1,t)^2	\notag\\
			&-\sum_{i=0}^{m}\frac{a_i}{2D} \int_{0}^{1} e^{ x} w_x^{(i)}(x,t)^2 \diff x,\label{v1}
	\end{align}
	where integration by parts has been used. Recalling \eqref{obj4}--\eqref{obj6}, we conclude that there exist positive constants $b_1,\cdots,b_m,$ determined by delay time $D$, design parameters $c_1,\cdots,c_m,$ and analysis parameters $a_1,\cdots,a_m,$ such that
	\begin{align}
			\sum_{i=1}^{m}\frac{a_ie}{D} w_x^{(i)}(1,t)^2=\sum_{i=1}^{m}a_ieD^{i-1} r_1^{(i)}(t)^2\leq\sum_{i=1}^{m}b_ir_i(t)^2.\label{bi}
	\end{align}
	Thus applying Young's inequality and inserting \eqref{bi} into \eqref{v1} yield that               
	\begin{align}
			\dot V(t)&\le-(k_1-\frac{1}{2})z_1(t)^2-\sum_{i=2}^{n}(k_i-1)z_i(t)^2\notag\\
			&-\left[\rho(c_1-\frac{1}{2})-\frac{a_0e}{2D}-\frac{1}{2}b_1\right]r_1(t)^2\notag\\
			&-\sum_{i=2}^{m-1}[\rho(c_i-1)-\frac{1}{2}b_i]r_i(t)^2\notag\\
			&-\left[\rho(c_m-\frac{1}{2})-\frac{1}{2}b_m\right]r_m(t)^2-(\frac{a_0}{2D}-\frac{1}{2})w(0,t)^2\notag\\
			&-\sum_{i=0}^{m}\frac{a_i}{2D} \int_{0}^{1} e^{ x} w_x^{(i)}(x,t)^2 \diff x-\sum_{i=1}^{m}\frac{a_i}{2D}w_x^{(i)}(0,t)^2.\label{lya}
	\end{align}
	Under the conditions of the design parameters $c_1,\cdots, c_m,\allowbreak k_1,\cdots,k_n$ in \eqref{ki}--\eqref{c}, and choosing the analysis parameters $a_0,\cdots,a_m,\rho$ as 
    \begin{align}
		a_0&\geq D,\quad a_i>0,\,i=1,\cdots,m\label{ai}\\
		\rho&>\max\left\{ \frac{a_0e}{3D}+\frac{1}{3}b_1,b_m,\frac{1}{2}b_j \right\}+1, j=2,\cdots,m-1 \label{rho}
    \end{align}
	we get
	\begin{align}
			\dot V(t)\leq&-\sum_{i=1}^{n}z_i(t)^2-\sum_{i=1}^{m}r_i(t)^2\notag\\	
			&-\sum_{i=0}^{m} \frac{a_i}{2D}\int_{0}^{1} e^{ x} w_x^{(i)}(x,t)^2 \diff x\leq-\varrho V(t),\label{eq:Ly1}
	\end{align}
	where $\varrho=\frac{1}{\theta_2}\min\left\{1, \frac{a_i}{2D} \right\}>0, i=0,\cdots,m.$	
	Recalling \eqref{ome2}, we thus have \eqref{ome3}, where $\Upsilon_{\Omega}=\frac{\theta_2}{\theta_1}$ and $\sigma_{\Omega}=\varrho$. The lemma is obtained.
 
\section{The proof of Lemma \ref{lemma:deltarget}} \label{dixlem2}
 Applying Cauchy-Schwarz inequality for \eqref{inverpred},  it is straightforwardly obtained that $|\delta(x,t)|^2$, $\forall x\in[0,1]$, are exponentially convergent to zero from the exponential convergence to zero of $|Z(t)|^2$, $\Vert w(\cdot,t)\Vert^2$ in Lemma \ref{lem:targetstability}. Moreover, taking $i$ order time derivatives  of \eqref{inverpred} at $x=1$, one obtains 
	\begin{align}
		&\delta_t^{(i)}(1,t)=e^{DA}Z^{(i)}(t)+D\int_{0}^{1}e^{DA(1-y)}Bw_t^{(i)}(y,t)\diff y\notag\\
		&=e^{DA}\big(A^iZ(t)+\sum_{j=0}^{i-1}A^{i-1-j}B{w_t^{(j)}}(0,t)\big)+\frac{1}{D^{(i-1)}}\notag\\
		&\times\Big[\sum_{j=0}^{i-1}(DA)^j\big(Bw_x^{(i-1-j)}(1,t)-e^{DA}Bw_x^{(i-1-j)}(0,t)\big)\notag\\
		&+\int_{0}^{1}(DA)^ie^{DA(1-y)}Bw(y,t)\diff y\Big],\label{deltai}
	\end{align}
where integration by parts and \eqref{obj3} have been used. According to \eqref{obj3}--\eqref{obj6}, applying Cauchy-Schwarz inequality, we also have
$
w_x^{(i)}(1,t)^2+
    w_x^{(i)}(0,t)^2 \le \Upsilon_w\left(\|w_x^{(i+1)}(\cdot,t)\|^2+\underline {r}_{i+1}(t)^2\right),~i=1,\cdots,m-1
$
for some positive $\Upsilon_w$. Then applying 
\eqref{deltai}, and recalling the exponential convergence to zero of $\sum_{i=0}^{m}\Vert w_x^{(i)}(\cdot,t) \Vert^2$, $|Z(t)|^2$, $|R(t)|^2$ in Lemma \ref{lem:targetstability}, and that of $|\delta(x,t)|^2,\forall x\in[0,1]$ obtained above, we have that $|\delta_t^{(i)}(1,t)|^2,i=0,\cdots,m$ are exponentially convergent to zero. The lemma is then obtained.
\end{appendices}

\bibliographystyle{bibsty}                     
\bibliography{Safe_Delay-Adaptive_Control}           

\end{document}